\documentclass[10pt, doublecolumn]{IEEEtran}
\usepackage{epsfig,latexsym}
\usepackage{float}
\usepackage{indentfirst}
\usepackage{amsmath}
\usepackage{bm}
\usepackage{amssymb}
\usepackage{times}
\usepackage{enumitem}
\usepackage{bm}
\usepackage{algorithm}
\usepackage[noend]{algpseudocode}
\usepackage{subfigure}
\usepackage{psfrag}
\usepackage{hyperref}
\usepackage{cite}
\usepackage{lastpage}
\usepackage{fancyhdr}
\usepackage{color} 
 \usepackage{amsthm}
\usepackage{bigints}
\usepackage{array}
\usepackage{booktabs}
\newtheorem{Lemma}{Lemma}
\newtheorem{Corollary}{Corollary}
\newtheorem{lemma}[Lemma]{$\mathbf{Lemma}$}

\newtheorem{corollary}[Corollary]{$\mathbf{Corollary}$}

\newcounter{problem}
\newcounter{save@equation}
\newcounter{save@problem}
\makeatletter
\newenvironment{problem}
{\setcounter{problem}{\value{save@problem}}%
  \setcounter{save@equation}{\value{equation}}%
  \let\c@equation\c@problem
  \subequations
}
{\endsubequations
  \setcounter{save@problem}{\value{equation}}%
  \setcounter{equation}{\value{save@equation}}%
}

\begin{document}
\title{  \vspace{-0em}{  On the Impact of Pinching Antennas on Traffic Offloading   }}

\author{ Zhiguo Ding, \IEEEmembership{Fellow, IEEE}, Robert Schober, \IEEEmembership{Fellow, IEEE}, and H. Vincent Poor, \IEEEmembership{Life Fellow, IEEE}   \thanks{ 
  
\vspace{-1em}

Z. Ding is with Nanyang Technological University, Singapore. 
R. Schober is with the Institute for Digital Communications,
Friedrich-Alexander-University Erlangen-Nurnberg (FAU), Germany. H. V. Poor is  with the  Department of Electrical and Computer Engineering, Princeton University,
Princeton, NJ 08544, USA.
 

  }\vspace{-2.5em}}
 \maketitle

\begin{abstract}
Pinching antennas are characterized by their capability to create strong line-of-sight connections and realize multi-antenna systems in a flexible manner.  Existing works have demonstrated the significant potential of pinching antennas for physical layer design. The aim of this paper is to investigate how pinching antennas can be used to reshape the architecture of future networks. In particular, this paper is motivated by the key advantage of pinching antennas, which is to reconfigure the physical boundaries of wireless cells, and focuses on the impact of pinching antennas on traffic offloading. The models for traffic offloading and pinching antenna transmission are presented first. Then, two traffic offloading strategies are developed based on whether an offloading user releases its bandwidth in its original cell. An overall transmit power minimization problem is formulated, where the optimal solutions for the transmit powers and antenna locations are obtained. The presented simulation results demonstrate that the use of pinching antennas can efficiently support traffic offloading, yield low energy consumption, and achieve balanced cell resource utilization.
\end{abstract}\vspace{-0.5em}

\begin{IEEEkeywords}
Pinching antennas, multi-cell interference management, resource allocation.  
\end{IEEEkeywords}
\vspace{-1em} 

\section{Introduction}
The ever-growing demand for higher system throughput and enhanced transmission reliability motivates the use of flexible antennas, which are distinguished by their capability to reconfigure the wireless channel conditions \cite{you6g,irs1,irs2,10318061,9264694}. Among the potential candidates for flexible antennas, pinching antennas have received significant attention since they can be flexibly placed close to users to establish strong line-of-sight connections between transceivers \cite{pinching_antenna2,mypa}. The generalization of pinching antennas beyond DOCOMO's original design has recently been proposed in \cite{11434944}, where the adoption of guided-wave media other than dielectric waveguides, such as leaky coaxial cables and metallic pipes,  is introduced to support diverse indoor and outdoor applications using both the sub-6 GHz and millimeter-wave bands. Novel designs to improve the efficiency of signal emission and the degrees of freedom of pinching-antenna-enabled transmission have been proposed in \cite{11202577} and \cite{11348983}. Furthermore, the benefits of pinching antennas for other key sixth-generation (6G) enabling techniques, such as physical layer security \cite{11289518, 11215679,11303890}, integrated sensing and communications (ISAC) \cite{mynpj,11212813,11314615,11414134}, federated learning \cite{11360288,hinapa1,yushen1}, etc., have also been recently investigated. 

Compared to the extensive study of the impact of pinching antennas on physical layer design, research on their impact on the network architecture started only recently. For example, stochastic-geometry studies of multi-cell pinching-antenna systems have been carried out in \cite{11195162} and \cite{11315149}, and an algorithmic framework was  developed in \cite{mymulcell} to utilize pinching antennas for realizing low-complexity and low-transmit-power multi-cell coordination. These existing studies revealed that pinching antennas have great potential to flexibly change the shape of a cell, and hence reconfigure the network architecture, as exemplified in Fig. \ref{fig0}. In particular, consider a two-cell scenario, where the two base stations, denoted by ${\rm BS}_0$ and ${\rm BS}_1$, are located at $(-r_c,0)$ and $(r_c,0)$, respectively, where $r_c$ denotes the radius of the cells. Assume that ${\rm BS}_0$ is equipped with a pinching antenna whose location range is between $(-r_c,0)$ and $(0,0)$, and ${\rm BS}_1$ is equipped with a conventional antenna located at $(r_c,0)$. We focus on the users located within the colored semi-circle area in Fig. \ref{fig0a}.  With conventional antennas, the users in the semi-circle area are always served by ${\rm BS}_2$, where Fig. \ref{fig0b} shows the transmit power (in dBm) required to achieve the target data rate of $1$ bit/s/Hz in the conventional-antenna case. Fig. \ref{fig0c} focuses on the pinching-antenna case, where each user in the colored area can be served by either ${\rm BS}_1$ or ${\rm BS}_2$, whichever yields the lower transmit power.   Fig. \ref{fig0} shows that if equipped with a pinching-antenna system, ${\rm BS}_0$ can effectively extend its service region, i.e., it can penetrate the coverage region of ${\rm BS}_1$, such that ${\rm BS}_1$'s users can be offloaded and served by ${\rm BS}_0$ with less power consumption. 

  \begin{figure}[!] \vspace{-2em}
\begin{center}
\subfigure[Considered network toplogies ]{\label{fig0a}\includegraphics[width=0.235\textwidth]{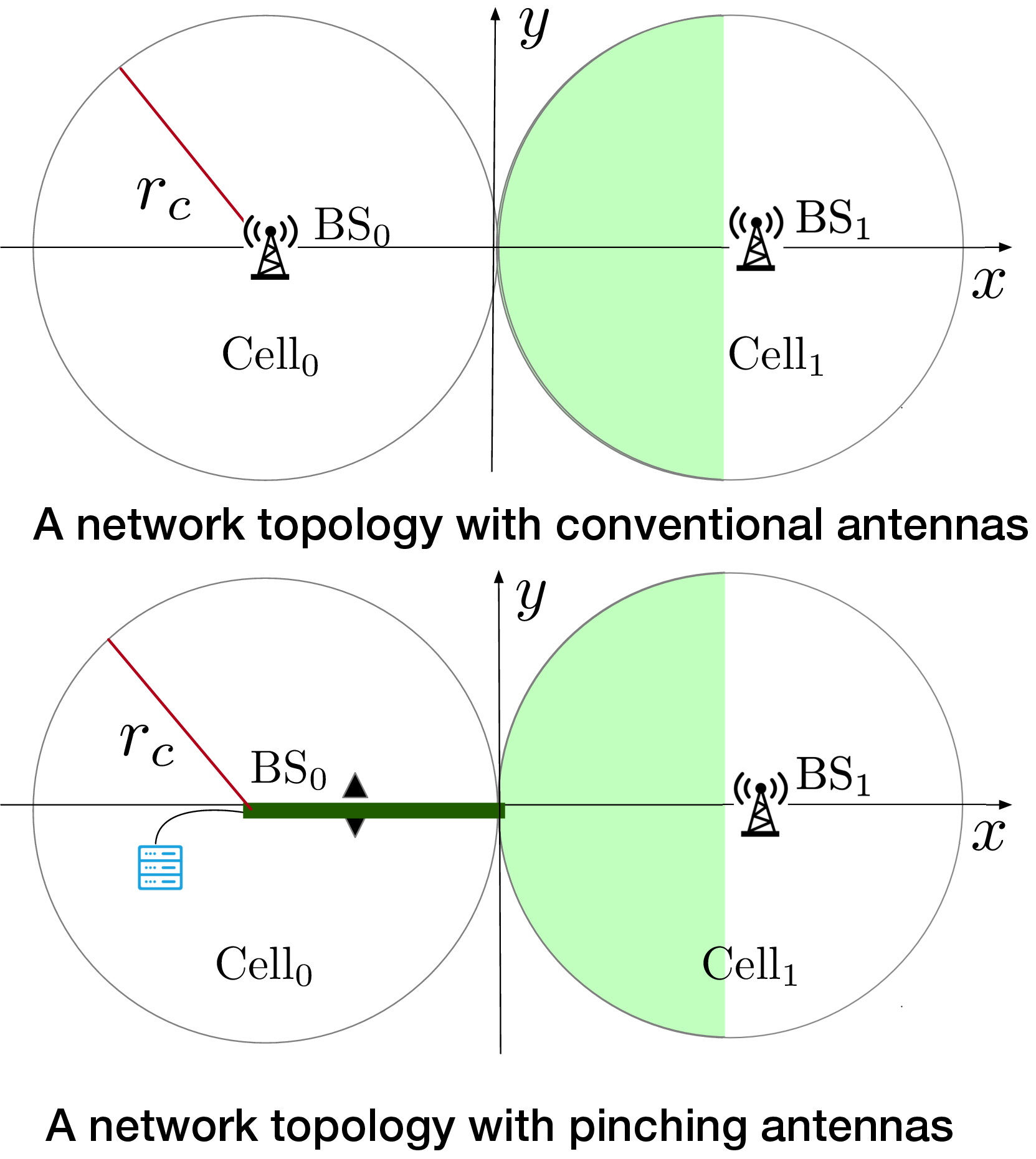}} 
\subfigure[Conventional antennas ]{\label{fig0b}\includegraphics[width=0.23\textwidth]{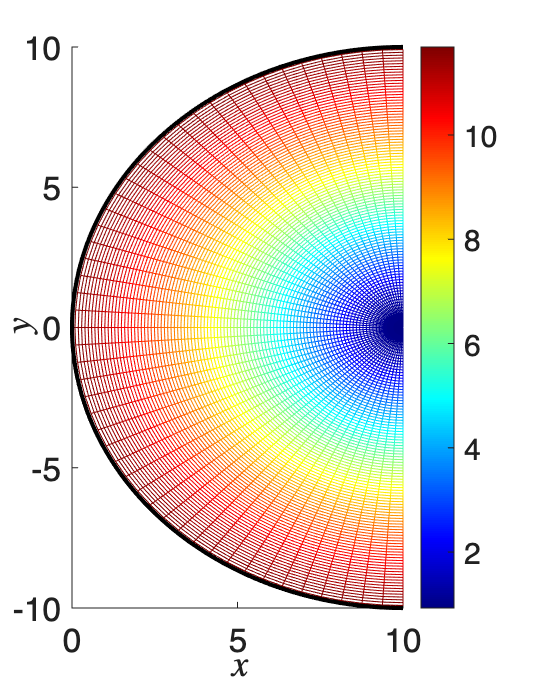}} 
\subfigure[Pinching antennas ]{\label{fig0c}\includegraphics[width=0.23\textwidth]{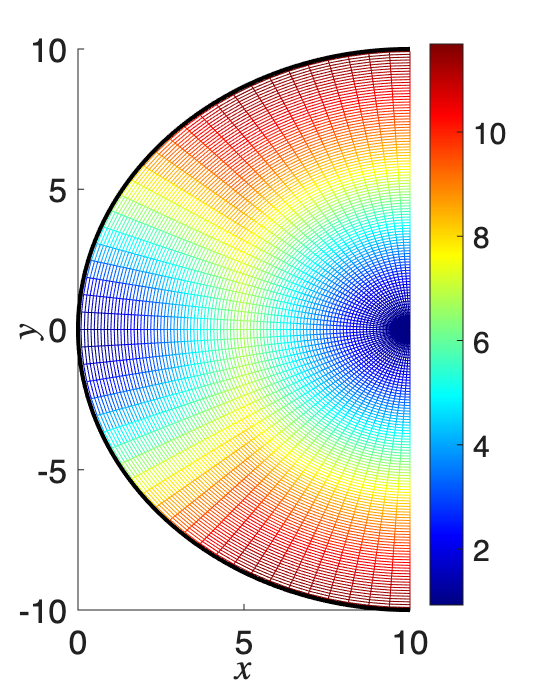}}   \vspace{-1em}
\end{center}
\caption{Illustration of the impact of pinching antennas on cell boundaries. The two base stations, denoted by ${\rm BS}_0$ and ${\rm BS}_1$, are located at $(-r_c,0)$ and $(r_c,0)$, respectively, where $r_c$ denotes the radius of the cells,  $r_c=10$ m, and the target data rate is $1$ bit/Hz/s. ${\rm BS}_0$ is equipped with a pinching antenna whose location range is between $(-r_c,0)$ and $(0,0)$, ${\rm BS}_1$ is equipped with a conventional antenna, and the antennas are placed at a height of $3$ meters. We focus on users located within the given semi-circle. Each user can be served by either ${\rm BS}_0$ or ${\rm BS}_1$, whichever yields the lower transmit power (measured in dBm).   \vspace{-1em} }\label{fig0}\vspace{-1em}
\end{figure}

The significant potential performance gain indicated by Fig. \ref{fig0} motivates us to thoroughly study the impact of pinching antennas on traffic offloading. Recall that traffic offloading is crucial to balance network loads and enhance user experience, particularly in cellular scenarios where some cells are congested while others remain underutilized \cite{7012044}. Supporting traffic offloading becomes even more important for forthcoming 6G applications, which are expected to rely on ultra-dense cell deployments, artificial-intelligence (AI) native control, and extremely high data rates \cite{9061001}. The aim of this paper is to demonstrate that the capability of pinching antennas to reconfigure cell boundaries can be used to efficiently support traffic offloading, achieve low energy consumption, and realize balanced cell resource utilization.

A general traffic offloading model is first developed in the paper, where $M$ overloaded cells try to offload their users to a neighboring underutilized cell with free capacity. Potential applications of the presented traffic offloading model, such as macro-cell-to-small-cell and small-cell-to-small-cell traffic offloading, are illustrated. Then, a system model for pinching-antenna-assisted traffic offloading is established, where the use of a pinching-antenna system with a segmented waveguide is proposed in order to increase the degrees of freedom and reduce maintenance costs.  Whether a user is willing to release its bandwidth in its original cell affects the traffic offloading strategy as follows: 

\begin{itemize}
\item We first consider a simple strategy, where a user does not release its allocated bandwidth in its original cell, i.e., each offloaded user is served in the new cell by using its own bandwidth. The advantage of this strategy is that the $M$ offloaded users can be served by their new base station in an
interference-free manner. However, while this traffic offloading strategy can reduce the overall transmit power of the system, offloading a user from its original cell to the new cell cannot increase the bandwidth available in the original cell. An overall transmit power minimization problem is formulated by optimizing the transmit powers of the base stations and the location of the pinching antenna. Closed-form expressions for the optimal transmit powers are obtained for a given antenna location, and then used to reformulate the considered optimization problem to a convex form, where the optimal antenna location can be obtained efficiently.  Furthermore, a special case with two cells is investigated, where a closed-form expression for the optimal location of the pinching antenna is obtained. The analytical result shows that the optimal location of the pinching antenna is the average of the projections of the users' positions onto the waveguide, where the distances between the users and the waveguide have no impact. 

\item We then consider a more sophisticated traffic offloading strategy, where completely-offloaded users release their allocated bandwidth in their original cell. The key advantage of this strategy is that the departure of a user frees up resources, which allows the user's original cell to accommodate additional users and hence improve the overall resource utilization. However, this offloading strategy is more complex since each offloaded user needs to be served in its new cell by sharing the bandwidth with other users. Non-orthogonal multiple access (NOMA) is employed to facilitate efficient bandwidth sharing while effectively suppressing co-channel interference \cite{mojobabook}. A challenge when applying NOMA is that there are multiple possible successive interference cancellation (SIC) decoding orders. For each of the possible SIC decoding orders, an overall transmit power minimization problem is formulated with the transmit powers and the antenna location as optimization variables. Closed-form expressions for the transmit powers are obtained first and then used to recast the original optimization problem into convex form, which ensures that the optimal solution can be obtained efficiently. For the special case of two cells, concise closed-form expressions for the optimal antenna location and the overall transmit power can be obtained. Unlike the previously considered offloading strategy, the obtained analytical results show that the distances between the users and the waveguide are crucial for determining the optimal SIC decoding order and the optimal location of the pinching antenna.

\end{itemize}

  \begin{figure}[!] \vspace{-2em}
\begin{center}
\subfigure[ A general pinching-antenna assisted traffic offloading scenario ]{\label{fig1a}\includegraphics[width=0.23\textwidth]{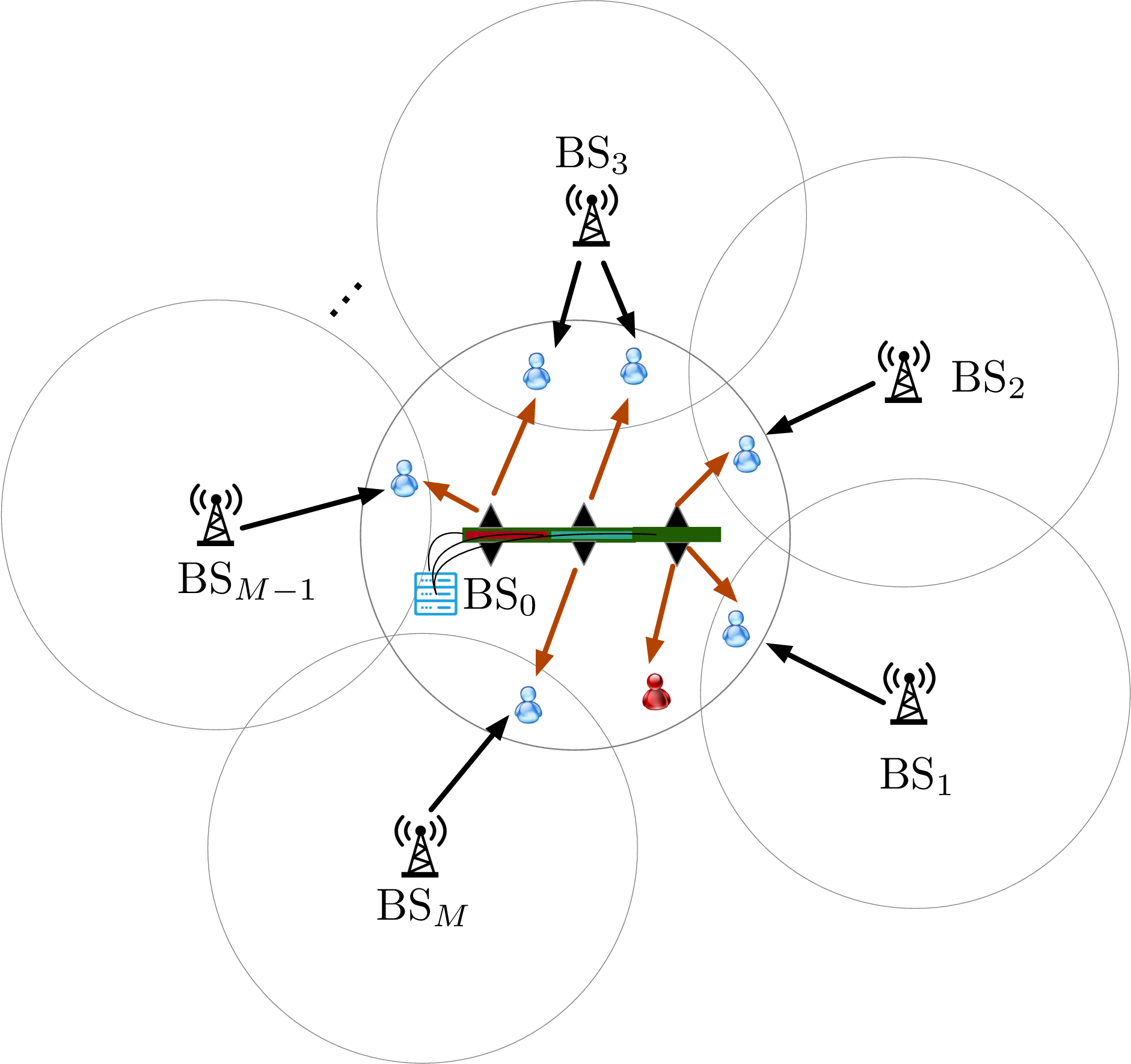}} 
\subfigure[A simplified traffic offloading scenario focusing on a single segement]{\label{fig1b}\includegraphics[width=0.23\textwidth]{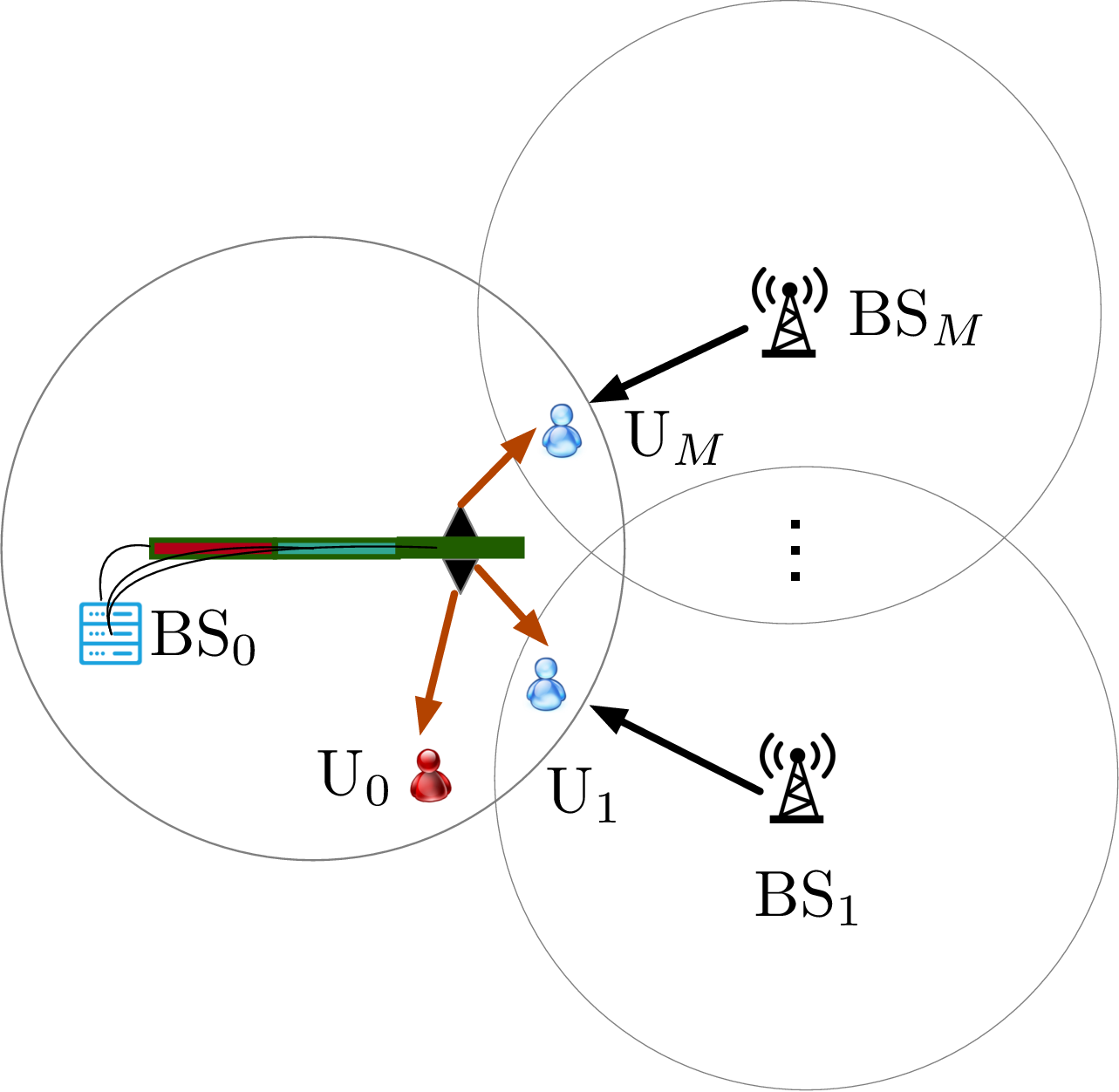}}   \vspace{-1.5em}
\end{center}
\caption{An illustration of the proposed pinching-antenna assisted traffic offloading scheme. \vspace{-1em} }\label{fig1}\vspace{-1em}
\end{figure}


\section{System Model}\label{section models}
\vspace{-0.5em}

\subsection{Traffic Offloading Model}
Consider a multi-cell downlink communication scenario with $M+1$ cells. The cells and their base stations are denoted by ${\rm Cell}_m$ and ${\rm BS}_m$, $0\leq m \leq M$, respectively. We assume that ${\rm Cell}_m$, $1\leq m \leq M$, are overloaded and hence want to offload parts of their traffic to their neighbor, ${\rm BS}_0$, as shown in Fig. \ref{fig1a}.  Applications of the considered traffic offloading scenario are discussed in the following.

\subsubsection{Macro-Cell-to-Small-Cell Traffic Offloading} In heterogeneous networks, each macro base station, e.g., ${\rm BS}_m$, $1\leq m \leq M$, covers a large geographic area, and hence is prone to traffic overloading.  Macro-cell-to-small-sell traffic offloading ensures that heavy traffic in the macro-cell base stations is shifted to a small-cell base station, e.g., ${\rm BS}_0$ \cite{7498101,7883844}. 
 
\subsubsection{Small-Cell-to-Small-Cell Traffic Offloading}
Future 6G systems will feature ultra-dense deployments of small cells, where traffic demand can be unevenly distributed, e.g., some cells are heavily congested while others remain underutilized \cite{9061001}.  A cell with light traffic, e.g., ${\rm Cell}_0$, can help its neighbors by serving the users that are originally associated with heavy-traffic cells, i.e., ${\rm Cell}_m$, $1\leq m \leq M$. As a result, users can be redistributed among cells to ensure load balancing and avoid overcrowding. 

\subsubsection{Cell-to-Wi-Fi Traffic Offloading}
Future wireless systems are expected to exploit the coexistence of cellular systems and Wi-Fi networks \cite{6076617}. Therefore, ${\rm BS}_0$ can be a Wi-Fi access point with light traffic to help reduce the traffic of cellular base stations, ${\rm BS}_m$, $1\leq m \leq M$. In addition, ${\rm Cell}_0$ could also be a cell in an airport/campus shop or building using unlicensed bands, to which traffic of overcrowded licensed bands can be offloaded.


 \vspace{-0.5em}
 \subsection{Pinching-Antenna Assisted Traffic Offloading}
 For illustration purposes, we assume that ${\rm BS}_m$, $1\leq m\leq M$, is equipped with a single conventional fixed-location antenna, and ${\rm BS}_0$ is equipped with a pinching-antenna system. In particular, denote the location of ${\rm BS}_m$, $0\leq m\leq M$, by $ {\boldsymbol \psi}_{m}^{\rm BS}$. 
 
In order to reduce maintenance cost, a segmented waveguide is employed by  ${\rm BS}_0$, where at most a single antenna is activated on each segment, as shown in Fig. \ref{fig1a}. By assuming that cells connected to different segments are served on orthogonal bandwidth resources, e.g., orthogonal frequency-division multiple access (OFDMA) subcarriers, there is no interference among the cells. Furthermore, because each segment is connected to an independent feed point, resource allocation across different segments, such as transmit-power control and antenna placement, can be decoupled \cite{11348983}. Therefore, this paper investigates the optimization of pinching-antenna assisted traffic offloading by focusing on the simplified traffic offloading scenario depicted in Fig. \ref{fig1b}, where a single segment is to offload $M$ users from the $M$ cells adjacent to the segment. Denote the user in ${\rm Cell}_m$ by ${\rm U}_m$, $1\leq m \leq M$. Furthermore, assume that in ${\rm Cell}_0$, there exists a user, denoted by  ${\rm U}_0$, being currently served by the segment. 

The dual connectivity (DC) mode is used, where each ${\rm U}_m$ is served by ${\rm BS}_0$ and ${\rm BS}_m$ simultaneously  \cite{7498101,7883844}. We note that complete offloading, i.e., ${\rm U}_m$ is served by ${\rm BS}_0$ only, is a special case of the DC mode. 
\vspace{-0.5em}
\subsection{Benchmark: No Traffic Offloading}
Without traffic offloading, ${\rm U}_m$ is served by ${\rm BS}_m$ only. We assume that OFDMA is adopted by the cells, where each user is allocated a unique OFDM subcarrier, and there is no interference among the $(M+1)$ users. Therefore, ${\rm U}_m$'s achievable data rate is given by
  \begin{align} 
  R_m^{\rm No-TO}  =
\log_2\left(
1+   \frac{\eta P_m^{\rm No-TO}}{P_{\rm N} \left| {\boldsymbol \psi} _m - {\boldsymbol \psi}_m^{\rm BS}\right|^2}
\right),
\end{align} 
where $ P_m^{\rm No-TO}$ denotes ${\rm U}_m$'s transmit power, $\eta = \frac{c^2}{16\pi^2 f_c^2 }$, $c$ is the speed of light, $f_c$ is the carrier frequency, $P_{\rm N}$ denotes the noise power, ${\boldsymbol \psi} _m$ denotes ${\rm U}_m$'s location, and $\left| {\boldsymbol \psi} _m - {\boldsymbol \psi}_m^{\rm BS}\right|$ denotes the distance between ${\rm U}_m$ and ${\rm BS}_m$. By assuming that all users have the same target data rate, denoted by $R$, the minimal transmit power for ${\rm BS}_m$ to serve ${\rm U}_m$ without traffic offloading is given by $
   P_m^{\rm No-TO} 
 = \epsilon \left| {\boldsymbol \psi} _m - {\boldsymbol \psi}_m^{\rm BS}\right|^2 $,
where $\epsilon = \frac{  P_{\rm N} \left(2^R-1\right)}{\eta}$.

  Depending on how an offloaded user is served in the new cell, two different traffic offloading strategies can be designed, as discussed in the following sections.

 \begin{table}[!]
\centering
\caption{The Two Considered  Traffic Offloading Strategies    \vspace{-1em}}
\begin{tabular}{c|ccc|ccc}
\toprule
  &  \multicolumn{3}{c|}{Strategy I} & \multicolumn{3}{c}{Strategy II}\\
\midrule
&  ${\rm U}_0$  & ${\rm U}_1$ & ${\rm U}_2$&${\rm U}_0$  & ${\rm U}_1$ & ${\rm U}_2$   \\
    \hline
   ${\rm BS}_0$   &$F_0 $&$  F_1 $&$  F_2$ &$F_0 $&$  F_0 $&$  F_0$
    \\
    \hline
     ${\rm BS}_1$    & &$  F_1  $&    & &$  F_1  $&\\
    \hline
 ${\rm BS}_2$   & & &$F_2$ & & &$F_2$ \\ 
\bottomrule
\end{tabular}\label{table1}\vspace{-1.5em}
\end{table}

\section{${\rm BS}_0$ Has Access to ${\rm U}_m$'s Subcarrier in ${\rm Cell}_m$}\label{section III}
This section focuses on the offloading strategy, where ${\rm BS}_0$ has access to ${\rm U}_m$'s subcarrier during traffic offloading, i.e., Strategy I illustrated in Table \ref{table1}. In particular, for the illustrated three-cell special case, $F_m$ denotes the subcarrier allocated to ${\rm U}_m$ in ${\rm Cell}_m$. For Strategy I, during the offloading, ${\rm U}_m$, $1\leq m\leq M$, is simultanesouly served by ${\rm BS}_m$ and ${\rm BS}_0$ on $F_m$. This offloading strategy offers the benefit that the $(M+1)$ users can be served in ${\rm Cell}_0$ in an interference-free manner, but suffers from the disadvantage that the bandwidth available in ${\rm Cell}_m$ is not increased despite the departure of ${\rm U}_m$. 

By using the DC mode, ${\rm U}_m$ is served by ${\rm BS}_m$ and ${\rm BS}_0$ simultaneously on a dedicated subcarrier, with the corresponding transmit powers denoted by $P_m$ and $P_{mm}$, $1\leq m\leq M$, respectively. Therefore, ${\rm U}_m$'s achievable data rate, $1\leq m\leq M$, is given by
  \begin{align} 
  R_m  =&  
\log_2\left(
1+ \frac{  \frac{\eta P_{mm}}{ \left| {\boldsymbol \psi} _m - {\boldsymbol \psi}^{\rm Pin}\right|^2}}
{     \frac{\eta P_m}{ \left| {\boldsymbol \psi} _m- {\boldsymbol \psi}_m^{\rm BS}\right|^2}+P_{\rm N}}
\right) \\\nonumber
&+
\log_2\left(
1+  \frac{\eta P_m}{P_{\rm N} \left| {\boldsymbol \psi} _m - {\boldsymbol \psi}_m^{\rm BS}\right|^2}  
\right)\\\nonumber
 =&  
\log_2\left(
1+   \frac{\eta P_{mm}}{ \left| {\boldsymbol \psi} _m - {\boldsymbol \psi}^{\rm Pin}\right|^2} +  \frac{\eta P_m}{P_{\rm N} \left| {\boldsymbol \psi} _m - {\boldsymbol \psi}^{\rm Pin}\right|^2}  
\right),
\end{align} 
where ${\boldsymbol \psi}^{\rm Pin}$ denotes the location of the pinching antenna, and the last step can be obtained directly from the capacity region of a multiple-access channel \cite{Cover1991}. 

Because the newly arrived users, ${\rm U}_m$, $1\leq m \leq M$, are supported on dedicated subcarriers, the data rate of  ${\rm Cell}_0$'s existing user, ${\rm U}_0$, is the same as for the case without offloading, i.e., $R_0 = 
\log_2\left(
1+   \frac{\eta P_0}{P_{\rm N} \left| {\boldsymbol \psi} _0 - {\boldsymbol \psi}^{\rm Pin}\right|^2}
\right)$, where $P_0$ denotes the transmit power.

 Therefore, the overall transmit power minimization problem can be formulated as follows: 
  \begin{problem}\label{pb:1} 
  \begin{alignat}{2}
 \underset{\mathcal{S}_P,{\boldsymbol \psi}^{\rm Pin}}{\rm{min}}  &\quad   P_0 +\sum^{M}_{m=1}\left(P_m+P_{mm}\right)
\\ s.t. &\quad  \label{1tst:1} R_m\geq R, \quad 0\leq m \leq M ,  \end{alignat}
\end{problem}
where set $\mathcal{S}_P$ collects all users' transmit powers.
For a given location of the pinching antenna, it is straightforward to verify that the optimal solution of $P_0$ is given by $      P_0^*
=\epsilon \left| {\boldsymbol \psi} _0 - {\boldsymbol \psi}^{\rm Pin}\right|^2$, and the optimal solutions of $P_m$ and $P_{mm}$ can be obtained by solving the following $M$ decoupled subproblems: 
   \begin{problem}\label{pb:2} 
  \begin{alignat}{2}
 \underset{P_m,P_{mm},{\boldsymbol \psi}^{\rm Pin}}{\rm{min}}  &\quad   P_m+P_{mm}
\\ s.t. &\quad   \frac{  P_{mm}}{   \left| {\boldsymbol \psi} _m - {\boldsymbol \psi}^{\rm Pin}\right|^2} +  \frac{  P_m}{  \left| {\boldsymbol \psi} _m - {\boldsymbol \psi}_m^{\rm BS}\right|^2} \geq \epsilon,
  \end{alignat}
\end{problem}
 which leads to the following three cases.

\subsubsection{Case with \texorpdfstring
{$\frac{1}{\left|\bm{\psi}_m-\bm{\psi}^{\rm Pin}\right|^2}
>\frac{1}{\left|\bm{\psi}_m-\bm{\psi}_m^{\rm BS}\right|^2}$}
{P12/(|psi2-psi1^Pin|^2) + P2/(|psi2-psi2^Pin|^2)}}
It is straightforward to show that the optimal solution for the transmit power is given by
\begin{align}
P_{mm}^* = \epsilon   \left| {\boldsymbol \psi} _m - {\boldsymbol \psi}^{\rm Pin}\right|^2, 
P_m^*=0.
\end{align}
This case corresponds to the scenario, where ${\rm U}_m$'s traffic is completely offloaded to ${\rm BS}_0$, i.e., ${\rm U}_m$ is served by ${\rm BS}_0$ only.

\subsubsection{Case with \texorpdfstring
{$\frac{1}{\left|\bm{\psi}_m-\bm{\psi}^{\rm Pin}\right|^2}
<\frac{1}{\left|\bm{\psi}_m-\bm{\psi}_m^{\rm BS}\right|^2}$}
{P12/(|psi2-psi1^Pin|^2) + P2/(|psi2-psi2^Pin|^2)}}
This is the case without traffic offloading, where ${\rm U}_m$ is served by ${\rm BS}_m$ only, i.e., 
\begin{align}
P_{mm}^* =0,
P_m^*= \epsilon     \left| {\boldsymbol \psi} _m - {\boldsymbol \psi}_m^{\rm BS}\right|^2,
\end{align}
which means that the overall transmit power for this case is not a function of ${\boldsymbol \psi}^{\rm Pin}$. 

\subsubsection{Case with \texorpdfstring
{$\frac{1}{\left|\bm{\psi}_m-\bm{\psi}^{\rm Pin}\right|^2}
=\frac{1}{\left|\bm{\psi}_m-\bm{\psi}_m^{\rm BS}\right|^2}$}
{P12/(|psi2-psi1^Pin|^2) + P2/(|psi2-psi2^Pin|^2)}} ${\rm U}_m$ is jointly served by both ${\rm BS}_m$ and  ${\rm BS}_0$, i.e., 
\begin{align}
P_{mm}^* +P_m^*= \epsilon     \left| {\boldsymbol \psi} _m - {\boldsymbol \psi}^{\rm Pin}\right|^2.
\end{align}
We note that the expression for the overall transmit power of the third case is identical to those of the other two cases, if $ \left|\bm{\psi}_m-\bm{\psi}^{\rm Pin}\right|^2
= \left|\bm{\psi}_m-\bm{\psi}_m^{\rm BS}\right|^2$. Therefore,  problem \eqref{pb:1} can be equivalently expressed as follows:
  \begin{problem}\label{pb:3} 
  \begin{alignat}{2}
 \underset{{\boldsymbol \psi}^{\rm Pin}}{\rm{min}}  &\quad    \sum_{m\in \mathcal{S}_1} \left| {\boldsymbol \psi} _m - {\boldsymbol \psi}^{\rm Pin}\right|^2
\\ s.t. &\quad  \label{3tst:1}  \left|\bm{\psi}_m-\bm{\psi}^{\rm Pin}\right|^2
\leq  \left|\bm{\psi}_m-\bm{\psi}_m^{\rm BS}\right|^2, \quad m\in\mathcal{S}_1,
\\   &\quad  \label{3tst:2}  \left|\bm{\psi}_i-\bm{\psi}^{\rm Pin}\right|^2
>   \left|\bm{\psi}_i-\bm{\psi}_i^{\rm BS}\right|^2, \quad i\in\mathcal{S}_2, \end{alignat}
\end{problem}
where $\mathcal{S}_1$ denotes the set collecting the users to be completely offloaded to ${\rm Cell}_0$, and $\mathcal{S}_2$ collects the remaining users. 

Assume that the locations of the users and the pinching antenna are given by $\bm{\psi}_m=(x_m,y_m,0)$, and $\bm{\psi}^{\rm Pin}=(x^{\rm Pin},y^{\rm Pin},d)$, respectively. Now, probelm \eqref{pb:3} can be recast as follows:
 \begin{problem}\label{pb:4} 
  \begin{alignat}{2}
 \underset{ x^{\rm Pin} }{\rm{min}}  &\quad     \sum_{m\in \mathcal{S}_1}   \left(
 x^{\rm Pin} -x_m
 \right)^2
\\ s.t. &\quad  \label{3tst:1} x_m-\sqrt{\tau_m}\leq  
 x^{\rm Pin}  \leq  x_m+\sqrt{\tau_m},   \quad m\in\mathcal{S}_1,
\\   &\quad  \label{3tst:2}   x^{\rm Pin} \leq x_i-\sqrt{\tau_i}, {\rm or} ,   
 x^{\rm Pin}  \geq  x_i+\sqrt{\tau_i}, \quad i\in\mathcal{S}_2, \end{alignat}
\end{problem}
where $\tau_{m}=\left|\bm{\psi}_m-\bm{\psi}_m^{\rm BS}\right|^2- \left(
 y^{\rm Pin} -y_m
 \right)^2-d^2
 $.
 
Problem \ref{pb:4} needs to be solved by enumerating all possible choices of $\mathcal{S}_1$. Recall that this paper aims to investigate the impact of pinching antennas on traffic offloading. Hence, the solution with complete offloading is particularly of interest. The optimal antenna location for the complete offloading case can be obtained from problem \eqref{pb:4} by assuming that all users are in $\mathcal{S}_1$, i.e., the constraint in \eqref{3tst:2} is ignored.  We note that without the constraint in \eqref{3tst:2}, problem \eqref{pb:4} is a convex optimization problem, and can be solved efficiently via convex optimization solvers\footnote{Throughout the paper, we assume that complete traffic offloading is feasible. Take problem \eqref{pb:4} as an example. We assume that after the constraint in \eqref{3tst:2} is ignored and all users are in $\mathcal{S}_1$, feasible solutions for the transmit powers and the antenna location exist. In practice, feasibility can be guaranteed by carefully scheduling users, which are close to the waveguide segment, for traffic offloading. }. 

By further assuming that the segment is sufficiently large, constraint \eqref{3tst:1} can be ignored, and problem \eqref{pb:4} can be solved analytically by applying the first-order derivative to its objective function, which leads to the following corollary. 
\begin{corollary}
Assuming that the constraints of problem \eqref{pb:4} can be ignored, the optimal location of the pinching antenna is given by
\begin{align}
\label{analyical 1}
x^{\rm Pin *} = \frac{1}{M}\sum^{M}_{m=0}x_m. 
\end{align}
\end{corollary}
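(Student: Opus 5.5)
The plan is to work directly with problem \eqref{pb:4} in the complete-offloading regime described just before the statement: every offloaded user belongs to $\mathcal{S}_1$, so $\mathcal{S}_1=\{1,\ldots,M\}$ and $\mathcal{S}_2=\emptyset$. The constraints in \eqref{3tst:1} and \eqref{3tst:2} are dropped by the hypothesis of the corollary. Problem \eqref{pb:4} then reduces to the unconstrained scalar problem of minimizing
\begin{align}
f(x^{\rm Pin})=\sum_{m=1}^{M}\left(x^{\rm Pin}-x_m\right)^2
\end{align}
over $x^{\rm Pin}\in\mathbb{R}$. The quantities $y^{\rm Pin}$ and $d$ enter only through $\tau_m$, so they no longer appear.

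First, I will observe that $f$ is a strictly convex quadratic, since $f''(x^{\rm Pin})=2M>0$. Hence its unique global minimizer is the unique stationary point. Setting the first-order derivative to zero gives
\begin{align}
f'(x^{\rm Pin})=2\sum_{m=1}^{M}\left(x^{\rm Pin}-x_m\right)=0,
\end{align}
which yields $x^{\rm Pin*}=\frac{1}{M}\sum_{m=1}^{M}x_m$. This is the first-order-derivative argument the paper announces immediately before the corollary.

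The remaining step, which I expect to be the main obstacle, is to reconcile this expression with the form \eqref{analyical 1} as worded. That form has the summation starting at $m=0$ while keeping the normalisation $\frac{1}{M}$. The two expressions coincide exactly when $x_0=0$, so I would justify this normalisation of the coordinate frame. The natural route is translation invariance: $f$ and the stationarity condition are invariant under a common shift of all $x$-coordinates along the waveguide. One may therefore fix the origin of the $x$-axis at the projection of the existing user ${\rm U}_0$ onto the segment. With that choice, adding the zero term $x_0$ to the sum does not change its value, and \eqref{analyical 1} follows.

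I will state explicitly that this identification depends on the coordinate convention $x_0=0$. The reason is that $P_0^*=\epsilon|{\boldsymbol\psi}_0-{\boldsymbol\psi}^{\rm Pin}|^2$ is not part of the objective of \eqref{pb:4}. If one instead kept that $P_0^*$ term, the minimizer would become an average over $M+1$ terms, not the stated one. So the proof as worded rests on the objective of \eqref{pb:4} together with the choice of origin at ${\rm U}_0$'s projection.
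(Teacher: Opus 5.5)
Your first-order-condition step is the same device the paper uses, but you apply it to the wrong objective. The step you use to reconcile your answer with the printed formula does not hold.

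Problem \eqref{pb:4} is meant to be an equivalent form of problem \eqref{pb:1}, whose objective contains $P_0^*=\epsilon|{\boldsymbol\psi}_0-{\boldsymbol\psi}^{\rm Pin}|^2=\epsilon\bigl((x^{\rm Pin}-x_0)^2+(y^{\rm Pin}-y_0)^2+d^2\bigr)$. This term depends on $x^{\rm Pin}$, so it cannot be dropped. Since ${\rm U}_0$ is served by ${\rm BS}_0$ alone, it enters exactly like a completely offloaded user. That is why the sum in the corollary runs from $m=0$ once ``all users are in $\mathcal{S}_1$''. Up to terms independent of $x^{\rm Pin}$, the objective is $\sum_{m=0}^{M}(x^{\rm Pin}-x_m)^2$, and setting its derivative to zero gives
\begin{align*}
x^{\rm Pin*}=\frac{1}{M+1}\sum_{m=0}^{M}x_m .
\end{align*}
So the $\frac{1}{M}$ in the statement is a misprint for $\frac{1}{M+1}$. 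This is the ``average of the projections of the users' positions onto the waveguide'' announced in the Introduction. You identified this alternative yourself; it is the one you should prove, noting the normalisation as a typo.

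The normalisation $x_0=0$ cannot rescue the printed version. A formula for an optimal location must shift by $c$ when every user coordinate shifts by $c$. The printed expression $\frac{1}{M}\sum_{m=0}^{M}x_m$ instead shifts by $\frac{M+1}{M}c$. Your own translation-invariance argument shows that your minimiser is $\frac{1}{M}\sum_{m=1}^{M}x_m$ in every frame. Agreement with the printed expression in one chosen frame therefore says nothing about the statement in the paper's coordinates.

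Those coordinates are fixed. The waveguide lies on the $x$-axis, and Section~\ref{section V} requires $x_m\geq\frac{N-2}{N}r_c>0$ for all $0\leq m\leq M$ (with $N=4$), so $x_0\neq 0$. In that frame, for $M=1$ the printed formula gives $x_0+x_1>\max\{x_0,x_1\}$, a point strictly to the right of both users' projections. That point is not optimal for either version of the objective, since both are strictly convex quadratics minimised within the range of the $x_m$. As written, your proposal proves a different claim, namely the minimiser of an objective without $P_0^*$ evaluated in a special frame, not the corollary.
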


{\it Remark 1:} The optimal analytical solution shown in \eqref{analyical 1} leads to an interesting observation. The optimal antenna location is not a function of $y_m$, the distances between the users and the waveguide, but is a function of the projections of the users' positions onto the waveguide, $x_m$, only. This conclusion is similar to the optimal solutions for the fair resource allocation problems previously investigated in \cite{closedformzid}. 


\section{${\rm BS}_0$ Does Not Have Access to ${\rm U}_m$'s Subcarrier in ${\rm Cell}_m$}\label{section IV}
The offloading strategy considered in the previous section has two shortcomings. First, even if ${\rm U}_m$ is completely offloaded to ${\rm Cell}_0$, the subcarrier originally occupied by ${\rm U}_m$ is not vacant. Therefore, the departure of ${\rm U}_m$ cannot free up bandwidth resources in ${\rm Cell}_m$. Second, the network controller needs to ensure that ${\rm U}_m$'s subcarrier can be used in ${\rm Cell}_0$, which leads to high system complexity. 

Therefore, in this section, we assume that ${\rm BS}_0$ does not have access to ${\rm U}_m$'s subcarrier, i.e., Strategy II illustrated in Table \ref{table1}. For the three-user special case shown in Table \ref{table1}, the subcarrier allocated to an existing user in ${\rm Cell}_0$, i.e., $F_0$, is used to accommodate the offloaded users, which can lead to excessive co-channel interference on $F_0$.  In order to suppress co-channel interference, assume that there are $M$ existing users in ${\rm Cell}_0$, denoted by ${\rm U}_{0m}$. Each of the $M$ new users, ${\rm U}_m$, is paired with an existing user in ${\rm Cell}_0$, e.g., ${\rm U}_{0m}$, and served on ${\rm U}_{0m}$'s subcarrier. As a result, different pairs of users are served on dedicated subcarriers. Compared to the strategy considered in the previous section, complete offloading becomes particularly meaningful, since ${\rm U}_m$'s subcarrier will be released in ${\rm Cell}_m$ and hence the overloading situation in ${\rm Cell}_m$ can be mitigated. Therefore, we are interested in the following two questions:
\begin{itemize}
\item Under what conditions is complete traffic offloading optimal?

\item When complete offloading happens, where should the pinching antenna be placed?
\end{itemize}

Since on each subcarrier two users are served simultaneously, it is advantageous to use NOMA. In particular, on ${\rm U}_{0m}$'s subcarrier, ${\rm BS}_0$ first superimposes the signals for ${\rm U}_{0m}$ and ${\rm U}_{m}$, and then broadcasts the mixture to the users. The adopted SIC decoding order, i.e., how the users carry out SIC, greatly affects the traffic offloading strategy, as discussed in the following two subsections. The optimal transmit powers and the pinching antenna location can be determined by comparing the outcomes of the optimization problems to be formulated for different SIC orders. 
\vspace{-0.5em}

\subsection{${\rm U}_{0m}$ Carries Out SIC} 
If ${\rm U}_{0m}$ carries out SIC, ${\rm U}_{0m}$ can decode ${\rm U}_{m}$'s signal with the following data rate:\footnote{To reduce the system complexity, we assume that the users adopt the same SIC decoding order. Allowing the users to choose different SIC decoding orders can further improve the performance of pinching-antenna-assisted traffic offloading, which is beyond the scope of this paper due to space limitations but is an important direction for future research.    }
\begin{align}
R_{0m}^m=
\log_2\left(
1+ \frac{  \frac{\eta P_{mm}}{ \left| {\boldsymbol \psi} _{0m} - {\boldsymbol \psi}^{\rm Pin}\right|^2}}
{     \frac{\eta P_{0m}}{ \left| {\boldsymbol \psi} _{0m}- {\boldsymbol \psi}^{\rm Pin}\right|^2}+P_{\rm N}}
\right)  ,
\end{align}
where $ {\boldsymbol \psi} _{0m}=(x_{0m}, y_{0m},0)$ denotes ${\rm U}_{0m}$'s location, $P_{mm}$ denotes the transmit power assigned to ${\rm U}_{m}$'s signal by ${\rm BS}_0$ on ${\rm U}_{0m}$'s subcarrier, and $P_{0m}$ denotes the transmit power for ${\rm U}_{0m}$'s signal. 

Provided that the first step of SIC is successful, ${\rm U}_{0m}$ can then decode its own signal with the following data rate: 
\begin{align}
R_{0m}=
\log_2\left(
1+    \frac{\eta P_{0m}}{ P_{\rm N} \left| {\boldsymbol \psi} _{0m}- {\boldsymbol \psi}^{\rm Pin}\right|^2}
\right)  .
\end{align}

Because the DC mode is used, ${\rm U}_{m}$ receives two signals, one from ${\rm BS}_0$ on ${\rm U}_{0m}$'s subcarrier and the other from ${\rm BS}_m$ on its own subcarrier. In particular, on ${\rm U}_{0m}$'s subcarrier, ${\rm U}_{m}$ directly decodes its signal by treating ${\rm U}_{0m}$'s signal as noise, and on its own subcarrier, there is no interference, which means that ${\rm U}_m$'s data rate can be expressed as follows:
\begin{align}
R_m=&
\log_2\left(
1+ \frac{  \frac{\eta P_{mm}}{ \left| {\boldsymbol \psi} _m - {\boldsymbol \psi}^{\rm Pin}\right|^2}}
{     \frac{\eta P_{0m}}{ \left| {\boldsymbol \psi} _m- {\boldsymbol \psi}^{\rm Pin}\right|^2}+P_{\rm N}}
\right) \\\nonumber &+
\log_2\left(
1+  \frac{\eta P_m}{P_{\rm N} \left| {\boldsymbol \psi} _m - {\boldsymbol \psi}_m^{\rm BS}\right|^2}  
\right).
\end{align}

{\it Remark 1:} In order to simplify the following optimization analysis, it is assumed that $R_{0m}^m\geq R_m$, which is equivalent to the following assumption:
\begin{align}\label{sicconstraint1}
 \left| {\boldsymbol \psi} _{0m} - {\boldsymbol \psi}^{\rm Pin}\right|^2\leq \left| {\boldsymbol \psi} _m - {\boldsymbol \psi}^{\rm Pin}\right|^2.
 \end{align} We note that the constraint in \eqref{sicconstraint1} means that the pinching antenna is placed closer to ${\rm U}_{0m}$ than ${\rm U}_m$. This constraint is due to the adopted SIC decoding order, and will be employed by the resource allocation problem formulation.


In particular, the overall transmit power minimization problem is formulated as follows:
   \begin{problem}\label{pb:5} 
  \begin{alignat}{2}
 \underset{ \mathcal{S}_P,  {\boldsymbol \psi}^{\rm Pin}}{\rm{min}}    &\quad   \sum_{m=1}^{M} \left(P_{0m}+P_{mm}+P_{m}\right)
\\ s.t. &\quad   \label{pb:5 con2}
\log_2\left(
1+    \frac{\eta P_{0m}}{ P_{\rm N} \left| {\boldsymbol \psi} _{0m}- {\boldsymbol \psi}^{\rm Pin}\right|^2}
\right) \geq R, \quad 1\leq m \leq M,\\  &\quad \label{pb:5 con3}
\log_2\left(
1+ \frac{  \frac{\eta P_{mm}}{ \left| {\boldsymbol \psi} _m - {\boldsymbol \psi}^{\rm Pin}\right|^2}}
{     \frac{\eta P_{0m}}{ \left| {\boldsymbol \psi} _m- {\boldsymbol \psi}^{\rm Pin}\right|^2}+P_{\rm N}}
\right) \\\nonumber &+
\log_2\left(
1+  \frac{\eta P_m}{P_{\rm N} \left| {\boldsymbol \psi} _m - {\boldsymbol \psi}_m^{\rm BS}\right|^2}  
\right)\geq R,\quad 1\leq m \leq M,\\ &\quad
 \left| {\boldsymbol \psi} _{0m} - {\boldsymbol \psi}^{\rm Pin}\right|^2< \left| {\boldsymbol \psi} _m - {\boldsymbol \psi}^{\rm Pin}\right|^2, \quad 1\leq m \leq M.
  \end{alignat}
\end{problem}
The above optimization problem can be solved by first determining the optimal transmit powers for a given pinching antenna location, and subsequently optimizing the antenna location, as shown in the following two subsections.

\subsubsection{Optimizing the transmit power for a given antenna location}
By assuming that the location of the pinching antenna is fixed, problem \eqref{pb:5} can be decomposed into the following $M$ decoupled subproblems:
   \begin{problem}\label{pb:6} 
  \begin{alignat}{2}
 \underset{ \mathcal{S}_P^m }{\rm{min}}    &\quad     P_{0m}+P_{mm}+P_{m} 
\\ s.t. &\quad    \eqref{pb:5 con2}, \eqref{pb:5 con3},
  \end{alignat}
\end{problem}
where $\mathcal{S}_P^m=\{P_{0m},P_{mm},P_{m} \}$.
Because the constraint $R^m_{0m}\geq R$ is removed, the optimization of ${\rm U}_{0m}$'s signal power is decoupled from that of  ${\rm U}_{m}$'s signal powers. In particular, the optimal solution of $P_{0m}$ can be obtained straightforwardly as follows: $P_{0m}^*=\epsilon  \left| {\boldsymbol \psi} _{0m}- {\boldsymbol \psi}^{\rm Pin}\right|^2$. Therefore, ${\rm U}_{m}$'s signal powers can be optimized as follows: 
 \begin{problem}\label{pb:7} 
  \begin{alignat}{2}
 \underset{ P_{mm},P_m}{\rm{min}}    &\quad    P_{mm}+P_{m}
\\ \label{st1 P7} s.t. &\quad   
\log_2\left(
1+ \frac{  P_{mm} g_{mm}}
{     P_{0m}g_{mm}+1}
\right) +
\log_2\left(
1+    P_m g_m   
\right)\geq R,
  \end{alignat}
\end{problem}
where $g_m=\frac{\eta }{P_{\rm N} \left| {\boldsymbol \psi} _m - {\boldsymbol \psi}_m^{\rm BS}\right|^2}  
$ and $g_{mm}=\frac{\eta }{P_{\rm N} \left| {\boldsymbol \psi} _m - {\boldsymbol \psi}^{\rm Pin}\right|^2}  
$. 

It is straightforward to show that problem \eqref{pb:7} is a convex optimization problem with respect to  $P_{mm}$ and $P_{m}$. The Lagrangian of problem \eqref{pb:7} is given by
\begin{align}\label{lagp1}
\mathcal{L} =& P_{mm}+P_{m}+\lambda_0 \left(R-
\log_2\left(
1+    P_m g_m   
\right)\right.\\\nonumber &\left.-
\log_2\left(
1+ \frac{  P_{mm} g_{mm}}
{     P_{0m}g_{mm}+1}
\right) \right)-\lambda_1P_{mm} -\lambda_2P_{m},
\end{align}
where $\lambda_m$, $0\leq m \leq 2$, denote the Lagrangian multiplers corresponding to \eqref{st1 P7}, $P_{mm}\geq 0$ and $P_m\geq 0$, respectively. By using the Lagrangian shown in \eqref{lagp1} and also applying the Karush–Kuhn–Tucker (KKT) conditions, a closed-form expression for the optimal transmit power solution of problem \eqref{pb:7} can be obtained \cite{Boyd}. 

We note that problem \eqref{pb:7} is similar to the optimization problems obtained for power minimization in hybrid NOMA, and by following steps similar to those in \cite{9679390},   the optimal solution for  $P_{mm}$ can be obtained as follows:
\begin{align}
P_{mm} &=
\begin{cases}
\displaystyle
\sqrt{\frac{2^R \left(P_{0m}g_{mm}+1\right)}{g_mg_{mm}}}
-\frac{P_{0m}g_{mm}+1}{g_{mm}}, & {\rm Case \, I}\\[2ex]
0, & {\rm Case \, II}\\[1ex]
\displaystyle
\frac{(2^R-1)\left(P_{0m}g_{mm}+1\right)}{g_{mm}}, & {\rm Case \, III}
\end{cases},
\end{align}
and $P_{m}$ can be obtained as follows:
\begin{align}
P_m &=
\begin{cases}
\displaystyle
\sqrt{\frac{2^R \left(P_{0m}g_{mm}+1\right)}{g_mg_{mm}}}
-\frac{1}{g_m}, & {\rm Case \, I}\\[2ex]
\displaystyle
\frac{2^R-1}{g_m}, & {\rm Case \, II}\\[2ex]
0, & {\rm Case \, III}
\end{cases},
\end{align}
where the three cases are given by
\begin{align}
\begin{cases}
\displaystyle
{\rm Case \, I:}\quad &
\frac{1}{2^Rg_m}<  \frac{P_{0m}g_{mm}+1}{g_{mm}}< \frac{2^R}{g_m} 
\\
{\rm Case \, II:}\quad &
  \frac{P_{0m}g_{mm}+1}{g_{mm}}\geq\frac{2^R}{g_m}
\\\label{case iii}
{\rm Case \, III:}\quad &
 {\frac{2^R\left(
P_{0m}g_{mm}+1
\right)}{  g_{mm}}} \leq \frac{1}{g_m}
\end{cases}.
\end{align}

Recall that the case of complete offloading, i.e., Case III, is of interest. By combining \eqref{case iii} with the fact that $P_{0m}=\epsilon  \left| {\boldsymbol \psi} _{0m}- {\boldsymbol \psi}^{\rm Pin}\right|^2$, the optimality condition for complete offloading is obtained as follows: 
\begin{align}
\epsilon  \left| {\boldsymbol \psi} _{0m}- {\boldsymbol \psi}^{\rm Pin}\right|^2+ \frac{  
 1
 }{  g_{mm}} \leq \frac{1}{2^R g_m},
\end{align}
which can be further expressed as the following explicit function of the users' locations:
\begin{align}\label{hidden1}
 & \left| {\boldsymbol \psi} _m - {\boldsymbol \psi}_m^{\rm BS}\right|^2\geq  2^R\\\nonumber &\times\left( \left(2^R-1\right)   \left| {\boldsymbol \psi} _{0m}- {\boldsymbol \psi}^{\rm Pin}\right|^2+   \left| {\boldsymbol \psi} _m - {\boldsymbol \psi}^{\rm Pin}\right|^2\right)   .
\end{align}

{\it Remark 2:}  The result in \eqref{hidden1} defines the optimality condition of complete offloading, which can be explained as follows. In particular, if the distance between ${\rm U}_m$  and ${\rm BS}_m$ is much larger than the distance between ${\rm U}_m$ and the pinching antenna, it is intuitive to completely offload the user to ${\rm Cell}_0$. We note that \eqref{hidden1} imposes an additional constraint when optimizing the location of the pinching antenna. 

\subsubsection{Optimizing the pinching antenna location}
We note that there may be many feasible pinching antenna locations that ensure the optimality of complete traffic offloading, and the aim of this subsection is to find the optimal pinching antenna location that not only ensures the optimality of complete offloading but also minimizes the overall transmit power.

Since complete offloading is of interest,  $P_m=0$, and hence problem \eqref{pb:5} can be expressed as follows:
   \begin{problem}\label{pb:8} 
  \begin{alignat}{2}
 \underset{  {\boldsymbol \psi}^{\rm Pin}}{\rm{min}}  &\quad   \sum_{m=1}^{M} \left(P_{0m}+P_{mm} \right)
\\ s.t. &\quad   \nonumber
 \left| {\boldsymbol \psi} _m - {\boldsymbol \psi}_m^{\rm BS}\right|^2\geq 2^R \left( \left(2^R-1\right)   \left| {\boldsymbol \psi} _{0m}- {\boldsymbol \psi}^{\rm Pin}\right|^2  \right.\\ \label{st8:1} &\quad  +\left. \left| {\boldsymbol \psi} _m - {\boldsymbol \psi}^{\rm Pin}\right|^2\right),\quad 1\leq m \leq M, \\\label{st8:2} &\quad  \left| {\boldsymbol \psi} _{0m} - {\boldsymbol \psi}^{\rm Pin}\right|^2\leq \left| {\boldsymbol \psi} _m - {\boldsymbol \psi}^{\rm Pin}\right|^2, \quad 1\leq m \leq M.
  \end{alignat}
\end{problem}

To simplify the objective function of problem \eqref{pb:8}, the closed-form solution for $P_{mm}$ needs to be rewritten as follows:
\begin{align} 
P_{mm} =&\frac{(2^R-1)\left(P_{0m}g_{mm}+1\right)}{g_{mm}}\\\nonumber =&(2^R-1) \epsilon  \left| {\boldsymbol \psi} _{0m}- {\boldsymbol \psi}^{\rm Pin}\right|^2  +  \epsilon \left| {\boldsymbol \psi} _m - {\boldsymbol \psi}^{\rm Pin}\right|^2 .\label{pmmxx}
\end{align}

By using the above expression of  $P_{mm}$ and also the closed-form expression of $P_{0m}$, problem \eqref{pb:8} can be expressed as follows:
   \begin{problem}\label{pb:9} 
  \begin{alignat}{2}
 \underset{ x^{\rm Pin}}{\rm{min}}  &\quad   \sum_{m=1}^{M} \left(  2^R   \left(x^{\rm Pin}-x_{0m}\right)^2  +   \left(x^{\rm Pin}-x_{m}\right)^2    \right)
\\ \nonumber s.t. &\quad  \left(2^R-1\right)  \left(x^{\rm Pin}-x_{0m}\right)^2 +  \left(x^{\rm Pin}-x_{m}\right)^2\leq d_{1m}, \\\label{const 9 :9} &\quad\quad\quad\quad\quad\quad\quad\quad\quad\quad\quad \quad 1\leq m \leq M, \\  &\quad \label{const 9 :10}
2\left(
x_{m}-x_{0m}
\right)  x^{\rm Pin} \leq   d_{2m}   , \quad 1\leq m \leq M.
  \end{alignat}
\end{problem}
where $d_m= 2^{-R}
 \left| {\boldsymbol \psi} _m - {\boldsymbol \psi}_m^{\rm BS}\right|^2- \left(2^R-1\right)\left(y^{\rm Pin}-y_{0m}\right)^2 -\left(2^R-1\right)d^2   - \left(y^{\rm Pin}-y_{m}\right)^2 + d^2 $ and $d_{2m}= x_{m} ^2+\left(y^{\rm Pin}-y_{m}\right)^2-x_{0m}^2-\left(y^{\rm Pin}-y_{0m}\right)^2$.

{\it Remark 3:} Problem \eqref{pb:9} is a convex optimization problem, and hence can be efficiently solved by off-the-shelf solvers. We note that constraint  \eqref{const 9 :9} is due to the fact that complete offloading is focused on, and constraint  \eqref{const 9 :10} is due to the adopted SIC decoding order.  An interesting observation is that if $R\geq \log_2\phi$, where $\phi$ is the golden ratio, i.e., $\phi=\frac{1+\sqrt{5}}{2}$, constraint  \eqref{const 9 :9} guarantees constraint  \eqref{const 9 :10}, as explained in the following. Constraint \eqref{const 9 :9} (or equivalently \eqref{st8:1}) can be rewritten as follows: 
\begin{align}\label{cost343}
2^R\left(2^R-1\right)   \left| {\boldsymbol \psi} _{0m}- {\boldsymbol \psi}^{\rm Pin}\right|^2   \leq &
 \left| {\boldsymbol \psi} _m - {\boldsymbol \psi}_m^{\rm BS}\right|^2\\\nonumber &-2^R  \left| {\boldsymbol \psi} _m - {\boldsymbol \psi}^{\rm Pin}\right|^2.
\end{align}
It is straightforward to show that if $R\geq \log_2\phi$, $2^R\left(2^R-1\right) \geq 1$, which means that any solution of ${\boldsymbol \psi}^{\rm Pin}$ which satisfies \eqref{cost343} also satisfies \eqref{sicconstraint1} (or equivallently \eqref{st8:2}).

   \vspace{-0.5em}
\subsection{${\rm U}_{m}$ Carries Out SIC} 
Recall that on ${\rm U}_{0m}$'s subcarrier, two downlink users, ${\rm U}_{0m}$ and ${\rm U}_{m}$, are served simultaneously, where the base station broadcasts the mixture of the signals for ${\rm U}_{0m}$ and ${\rm U}_{m}$. Unlike the previous section, in this section, we focus on the case with  ${\rm U}_{m}$ carrying out SIC. 

Because ${\rm U}_{0m}$ does not carry out SIC, it decodes its signal directly by treating ${\rm U}_{m}$'s signal as noise, which means that ${\rm U}_{0m}$'s achievable data rate is given by 
\begin{align}
R_{0m}=\log_2\left(
1+ \frac{  \frac{\eta P_{0m}}{ \left| {\boldsymbol \psi} _{0m} - {\boldsymbol \psi}^{\rm Pin}\right|^2}}
{     \frac{\eta P_{mm}}{ \left| {\boldsymbol \psi} _{0m}- {\boldsymbol \psi}^{\rm Pin}\right|^2}+P_{\rm N}}
\right). 
\end{align}

${\rm U}_{m}$ can first decode ${\rm U}_{0m}$'s signal with the following achievable data rate:
\begin{align}
R_{0m}^m=\log_2\left(
1+ \frac{  \frac{\eta P_{0m}}{ \left| {\boldsymbol \psi} _{m} - {\boldsymbol \psi}^{\rm Pin}\right|^2}}
{     \frac{\eta P_{mm}}{ \left| {\boldsymbol \psi} _{m}- {\boldsymbol \psi}^{\rm Pin}\right|^2}+P_{\rm N}}
\right). 
\end{align}
Assuming that ${\rm U}_{m}$ can decode ${\rm U}_{0m}$'s signal correctly, ${\rm U}_{m}$ can subtract ${\rm U}_{0m}$'s signal and then decode its own signal with the following achievable data rate on  ${\rm U}_{0m}$'s subcarrier: $\log_2\left(
1+   \frac{\eta P_{mm}}{P_{\rm N} \left| {\boldsymbol \psi} _m - {\boldsymbol \psi}^{\rm Pin}\right|^2} 
\right)$. Since ${\rm U}_{m}$ receives two data streams from both ${\rm BS}_0$ and ${\rm BS}_m$,   ${\rm U}_{m}$'s overall achievable data rate is given by
\begin{align}
R_m=&\log_2\left(
1+   \frac{\eta P_{mm}}{P_{\rm N} \left| {\boldsymbol \psi} _m - {\boldsymbol \psi}^{\rm Pin}\right|^2} 
\right) \\\nonumber &+\log_2\left(
1+   \frac{\eta P_{m}}{P_{\rm N} \left| {\boldsymbol \psi} _m - {\boldsymbol \psi}_m^{\rm BS}\right|^2} 
\right) . 
\end{align}

Therefore, the considered total transmit power minimization problem can be formulated as follows:
 \begin{problem}\label{pb:12} 
  \begin{alignat}{2}
 \underset{ \mathcal{S}_P, {\boldsymbol \psi}^{\rm Pin}}{\rm{min}} &\quad   \sum_{m=1}^{M} \left(P_{0m}+P_{mm}+P_{m}\right)
\\ s.t. &\quad   
\log_2\left(
1+ \frac{  \frac{\eta P_{0m}}{ \left| {\boldsymbol \psi} _{0m} - {\boldsymbol \psi}^{\rm Pin}\right|^2}}
{     \frac{\eta P_{mm}}{ \left| {\boldsymbol \psi} _{0m}- {\boldsymbol \psi}^{\rm Pin}\right|^2}+P_{\rm N}}
\right) \geq R, \quad 1\leq m \leq M,\\  &\quad
\log_2\left(
1+   \frac{\eta P_{mm}}{P_{\rm N} \left| {\boldsymbol \psi} _m - {\boldsymbol \psi}^{\rm Pin}\right|^2} 
\right) \\\nonumber &+\log_2\left(
1+   \frac{\eta P_{m}}{P_{\rm N} \left| {\boldsymbol \psi} _m - {\boldsymbol \psi}_m^{\rm BS}\right|^2} 
\right) \geq R,\quad 1\leq m \leq M,\\ &\quad  \left| {\boldsymbol \psi} _{0m} - {\boldsymbol \psi}^{\rm Pin}\right|^2\geq  \left| {\boldsymbol \psi} _m - {\boldsymbol \psi}^{\rm Pin}\right|^2, \quad 1\leq m \leq M.\label{pb:12 3} 
  \end{alignat}
\end{problem}

{\it Remark 4:} Similar to the case considered in the previous section, it is assumed that ${\rm U}_m$ can carry out SIC successfully, i.e., $R_{0m}^m\geq R_{0m}$, which leads to the constraint shown in \eqref{pb:12 3}. Similar to problem \eqref{pb:5}, problem \eqref{pb:12} can be solved in  two stages, as shown in the following subsections. 
 
 \subsubsection{Optimizing the transmit power for a given antenna location}
 For a given pinching antenna location, problem \eqref{pb:12} can be decomposed into the following $M$ decoupled subproblems: 
  \begin{problem}\label{pb:13} 
  \begin{alignat}{2}
 \underset{ \mathcal{S}_P^m}{\rm{min}}  &\quad     P_{0m}+P_{mm}+P_{m} 
\\ s.t. &\quad    
 \left(2^R-1\right)   g_{0m}P_{mm} - g_{0m}P_{0m}  +\left(2^R-1\right)
 \leq 0, \\  &\quad
R-\log_2\left(
1+   g_{mm}  P_{mm}  
\right) -\log_2\left(
1+  g_m P_m 
\right) \leq 0 ,
  \end{alignat}
\end{problem}
 where  $g_{0m}=\frac{\eta }{P_{\rm N} \left| {\boldsymbol \psi} _{0m} - {\boldsymbol \psi}^{\rm Pin}\right|^2}  
$. 
 
 Unlike problem \eqref{pb:5} or the problems in \cite{9679390}, for problem \eqref{pb:13}, the optimization of ${\rm U}_{0m}$'s transmit power, $P_m$, is inherently coupled with that of $P_m$ and $P_{mm}$, and therefore it cannot be solved separately. Nevertheless, the optimal transmit powers can be obtained in closed-form as shown in the following lemma. 
 \begin{lemma}\label{lemma2}
 The optimal solution of problem \eqref{pb:13} can be expressed as follows:
\begin{align}
P_{mm} &=
\begin{cases}
\displaystyle
 \sqrt{\frac{1}{g_{mm}g_m}}-\frac{1}{g_{mm}}, & {\rm Case \, I}\\[2ex]
0, & {\rm Case \, II}\\[1ex]
\displaystyle
\frac{2^{R}-1}{g_{mm}}, & {\rm Case \, III}
\end{cases},
\end{align}
\begin{align}
P_{0m} &=
\begin{cases}
\displaystyle
 \left(2^R-1\right)P_{mm}+\frac{2^R-1}{g_{0m}}, & {\rm Case \, I}\\[2ex]
\displaystyle
\frac{2^R-1}{g_{0m}}, & {\rm Case \, II}\\[2ex]
   \frac{ \left(2^R-1\right)^2 }{g_{mm}}   +\frac{2^R-1}{ g_{0m}} , & {\rm Case \, III}
\end{cases},
\end{align} 
\begin{align}
P_m &=
\begin{cases}
\displaystyle
\frac{ 2^R}{\sqrt{g_mg_{mm}}}-\frac{1}{g_m}, & {\rm Case \, I}\\[2ex]
\displaystyle
\frac{2^R-1}{g_m}, & {\rm Case \, II}\\[2ex]
0, & {\rm Case \, III}
\end{cases},
\end{align}
where the conditions for the three cases are given by
\begin{align}
\begin{cases}
\displaystyle
{\rm Case \, I:}\quad &
 1< \frac{g_{mm}}{g_m} < 2^{2R}
\\
{\rm Case \, II:}\quad &
  \frac{ g_{mm} }{g_m }  \leq 1
\\\label{case iii ?}
{\rm Case \, III:}\quad &
 \frac{g_{mm}}{g_m}\geq  2^{2R}
 \end{cases}.
\end{align}
\end{lemma}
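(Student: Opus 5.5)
The plan is to remove $P_{0m}$ first, then reduce what is left to a one-dimensional convex problem. $P_{0m}$ appears only in the objective, with coefficient one, and in the first constraint of problem \eqref{pb:13}, where increasing it only relaxes that constraint. Hence at any optimum the first constraint is tight:
\begin{align}
P_{0m}=\left(2^R-1\right)P_{mm}+\frac{2^R-1}{g_{0m}}.
\end{align}
Otherwise $P_{0m}$ could be lowered while preserving feasibility and strictly decreasing the objective. Substituting this into the objective gives $2^R P_{mm}+P_m+\frac{2^R-1}{g_{0m}}$. The third constant term does not depend on the remaining variables, so problem \eqref{pb:13} becomes
\begin{align}
\min_{P_{mm}\geq 0,\,P_m\geq 0}\ 2^R P_{mm}+P_m \quad {\rm s.t.}\quad \left(1+g_{mm}P_{mm}\right)\left(1+g_mP_m\right)\geq 2^R.
\end{align}
Once $P_{mm}$ is known, the stated $P_{0m}$ in each of the three cases follows from the tight first constraint.

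In the reduced problem the rate constraint must also be tight: if it were slack, $P_m$ or $P_{mm}$ could be decreased. I will change variables to $a=1+g_{mm}P_{mm}\geq 1$ and $b=1+g_mP_m\geq 1$, with $ab=2^R$, which forces $a\in[1,2^R]$. The objective then becomes a function of $a$ alone:
\begin{align}
f(a)=\frac{2^R(a-1)}{g_{mm}}+\frac{2^R/a-1}{g_m}.
\end{align}
Here $f''(a)=\frac{2^{R+1}}{g_m a^3}>0$, so $f$ is strictly convex on $[1,2^R]$. Setting $f'(a)=\frac{2^R}{g_{mm}}-\frac{2^R}{g_m a^2}=0$ gives the unconstrained minimizer $a^\star=\sqrt{g_{mm}/g_m}$.

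By convexity, the minimizer over $[1,2^R]$ is $a^\star$ projected onto the interval. This produces exactly the three cases of the lemma:
\begin{itemize}
\item Case I, $1<\frac{g_{mm}}{g_m}<2^{2R}$: the interior point $a^\star$ is optimal. It gives $P_{mm}=\frac{a^\star-1}{g_{mm}}=\frac{1}{\sqrt{g_{mm}g_m}}-\frac{1}{g_{mm}}$ and $P_m=\frac{2^R/a^\star-1}{g_m}=\frac{2^R}{\sqrt{g_mg_{mm}}}-\frac{1}{g_m}$.
\item Case II, $\frac{g_{mm}}{g_m}\leq 1$: here $f'(a)\geq 0$ on the interval, so $a=1$. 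This gives $P_{mm}=0$ and $P_m=\frac{2^R-1}{g_m}$.
\item Case III, $\frac{g_{mm}}{g_m}\geq 2^{2R}$: here $f'(a)\leq 0$ on the interval, so $a=2^R$. This gives $P_m=0$ and $P_{mm}=\frac{2^R-1}{g_{mm}}$, and hence $P_{0m}=\frac{(2^R-1)^2}{g_{mm}}+\frac{2^R-1}{g_{0m}}$.
\end{itemize}

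The step needing the most care is justifying the elimination of $P_{0m}$ and the tightness of both constraints, since this is where the coupling noted before the lemma is resolved. Once that reduction is done, the rest is an elementary convex scalar minimization. As a cross-check, the same result follows from the KKT conditions of the Lagrangian, analogous to \eqref{lagp1}: stationarity gives $\lambda=\frac{2^R(1+g_{mm}P_{mm})}{g_{mm}}=\frac{1+g_mP_m}{g_m}$ in the interior case. Here $\lambda$ denotes the multiplier of the reduced rate constraint; it equals $\ln 2$ times the multiplier on the original $\log_2$ constraint.
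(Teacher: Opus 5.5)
Your proof is correct. It shares the paper's first step, but it solves the remaining problem differently.

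The paper writes the full Lagrangian of problem \eqref{pb:13} with five multipliers. It argues that $P_{0m}\neq 0$, so that $\lambda_1=0$, $\lambda_4=1/g_{0m}$, and the first constraint is active; this is the KKT counterpart of your elimination of $P_{0m}$. It then enumerates the three sign patterns of $(P_{mm},P_m)$ and solves the stationarity equations in each. The case conditions come from $P_{mm},P_m>0$ in the interior case, and from $\lambda_3\geq 0$ and $\lambda_2\geq 0$ in the two boundary cases.

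You instead reduce the remaining two-variable problem to minimizing a strictly convex scalar function over $a\in[1,2^R]$. All three cases then come out at once as the projection of $a^\star=\sqrt{g_{mm}/g_m}$ onto that interval. This gives several things without extra work. The cases are automatically exhaustive and mutually exclusive, and the optimum is unique. The paper's ``it is straightforward to show'' steps ($P_{0m}\neq 0$, $\lambda_5\neq 0$) are replaced by explicit monotonicity arguments. It also shows where the thresholds $1$ and $2^{2R}$ come from: the coupling with ${\rm U}_{0m}$ only raises the effective price of $P_{mm}$ from $1$ to $2^R$.

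The paper's KKT route, in exchange, yields the multiplier values explicitly and follows the hybrid-NOMA template it cites. That template would still apply if no one-dimensional reduction were available.

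One small slip in your cross-check: since $\log_2 x=\ln x/\ln 2$, your $\lambda$ equals the paper's $\lambda_5$ divided by $\ln 2$, not multiplied by it. In Case I, $\lambda=2^R/\sqrt{g_mg_{mm}}$ while $\lambda_5=2^R\ln 2/\sqrt{g_mg_{mm}}$. This has no bearing on the argument.
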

\begin{proof}
See Appendix \ref{proof2}.
\end{proof}
 
 \subsubsection{Optimizing the pinching antenna location}
Similar to the previous section, the case with complete traffic offloading will be focused on here, since it leads to a reduction of the traffic and an increase of the available bandwidth in ${\rm Cell}_m$, $1\leq m \leq M$. 

By using the optimality condition for Case III shown in Lemma \ref{lemma2},   problem \eqref{pb:12} can be expressed as follows:
   \begin{problem}\label{pb:14} 
  \begin{alignat}{2}
 \underset{   {\boldsymbol \psi}^{\rm Pin}}{\rm{min}}  &\quad   \sum_{m=1}^{M} \left(P_{0m}+P_{mm} \right)
\\ \label{st14:1}s.t. &\quad  
   \left| {\boldsymbol \psi} _m - {\boldsymbol \psi}_m^{\rm BS}\right|^2  \geq  2^{2R}   \left| {\boldsymbol \psi} _m - {\boldsymbol \psi}^{\rm Pin}\right|^2, \quad 1\leq m \leq M,\\ &\quad \label{st14:2}
    \left| {\boldsymbol \psi} _{0m} - {\boldsymbol \psi}^{\rm Pin}\right|^2\geq  \left| {\boldsymbol \psi} _m - {\boldsymbol \psi}^{\rm Pin}\right|^2,  \quad 1\leq m \leq M ,
  \end{alignat}
\end{problem}
where the constraint in \eqref{st14:1} is due to the fact that the complete offloading case is focused on, and the constraint in \eqref{st14:2} is due to the adopted SIC decoding order. 
 
By using the closed-form expressions in Lemma \ref{lemma2},  the objective function of problem \eqref{pb:14} can be expressed as the following explicit function of the pinching antenna location:
\begin{align} \label{total3}
P_{0m}+P_{mm} =&   \frac{2^R \left(2^R-1\right) }{g_{mm}}   +\frac{2^R-1}{ g_{0m}} 
\\\nonumber=& 
 2^R\epsilon   \left| {\boldsymbol \psi} _m - {\boldsymbol \psi}^{\rm Pin}\right|^2   + \epsilon \left| {\boldsymbol \psi} _{0m} - {\boldsymbol \psi}^{\rm Pin}\right|^2  .
\end{align}

By using the expression for the total transmit power shown in \eqref{total3}, problem \eqref{pb:14} can be expressed as follows:
   \begin{problem}\label{pb:15} 
  \begin{alignat}{2}
 \underset{  {\boldsymbol \psi}^{\rm Pin}}{\rm{min}}   &\quad   \sum_{m=1}^{M} \left(   2^R\epsilon   \left| {\boldsymbol \psi} _m - {\boldsymbol \psi}^{\rm Pin}\right|^2   + \epsilon \left| {\boldsymbol \psi} _{0m} - {\boldsymbol \psi}^{\rm Pin}\right|^2  \right)
\\ \label{const 15 :9}s.t. &\quad  
\eqref{st14:1}, \eqref{st14:2}. 
  \end{alignat}
\end{problem}
 
Problem \eqref{pb:15} can be further expressed as the following convex form:
 \begin{problem}\label{pb:16} 
  \begin{alignat}{2}
 \underset{ x^{\rm Pin}}{\rm{min}} &\quad   \sum_{m=1}^{M} \left(   2^R   \left(
 x^{\rm Pin}-x_m
 \right)^2   +    \left(
 x^{\rm Pin}-x_{0m}
 \right)^2   \right)
\\ \label{const 16 :1}s.t. &\quad  
      2^{2R}  \left(
 x^{\rm Pin}-x_m
 \right)^2 \leq   d_{3m},\quad 1\leq m \leq M,\\ \label{const 16 :2}&\quad 
2   \left(
x_m-x_{0m}
 \right) x^{\rm Pin}    >    d_{2m}  ,  \quad 1\leq m \leq M ,
  \end{alignat}
\end{problem}
where $d_{3m}= \left| {\boldsymbol \psi} _m - {\boldsymbol \psi}_m^{\rm BS}\right|^2 -2^{2R} \left(
 y^{\rm Pin}-y_m
 \right)^2- 2^{2R} d^2$. 
It is straightforward to verify that problem \eqref{pb:16} is a convex optimization problem, and hence it can be solved efficiently by off-the-shelf optimization solvers. 

After the optimal solutions for problems \eqref{pb:9} and \eqref{pb:14} are obtained, the minimal transmit power can be determined by comparing the objective values of the two problems. 
 
 \vspace{-0.5em}
 
\subsection{A Special Case Study}
This subsection focuses on the special case with two cells, i.e., ${\rm BS}_1$ offloads ${\rm U}_1$'s traffic to ${\rm BS}_0$. Recall that two optimization problems have been formulated for the cases with different SIC decoding orders. As can be seen from \eqref{pb:8} and \eqref{pb:14}, there are two constraints for each of the formulated problems, one for the optimality of the complete offloading strategy and the other due to the adopted SIC decoding order. In this section, we assume that the distance between ${\rm U}_1$ and ${\rm BS}_1$ is so large that the constraints in \eqref{st8:1} and \eqref{st14:1} always hold for any choice of $x^{\rm Pin}$, i.e., complete offloading is always preferred. For example, if ${\rm BS}_1$ is a macro base station, it is very likely that the distance between ${\rm U}_1$ and ${\rm BS}_1$ is much larger than the distances between the user and the pinching antenna of the small cell base station. 

With the aforementioned assumptions, a concise closed-form expression can be obtained for the optimal antenna location. In particular, denote the strong and weak users by ${\rm U}_{\rm s}$ and ${\rm U}_{\rm w}$, respectively. In particular, if $|y_{01}|\leq |y_1|$, ${\rm U}_{01}$ is the strong user, and ${\rm U}_{1}$ is the weak user, i.e., ${\rm U}_{\rm s}={\rm U}_{01}$ and ${\rm U}_{\rm w}={\rm U}_{1}$, otherwise ${\rm U}_{\rm s}={\rm U}_{1}$ and ${\rm U}_{\rm w}={\rm U}_{01}$. The locations of ${\rm U}_{\rm s}$ and ${\rm U}_{\rm w}$ are denoted by $(x_{\rm s}, y_{\rm s},0)$ and $(x_{\rm w}, y_{\rm w},0)$, respectively. For the waveguide deployment illustrated in Fig. \ref{fig1b}, we assume that $y^{\rm Pin}=0$. With the above assumptions, the optimal pinching antenna location and the minimal transmit power are provided in the following lemma.

      \begin{figure}[t]\centering \vspace{-0.2em}
    \epsfig{file=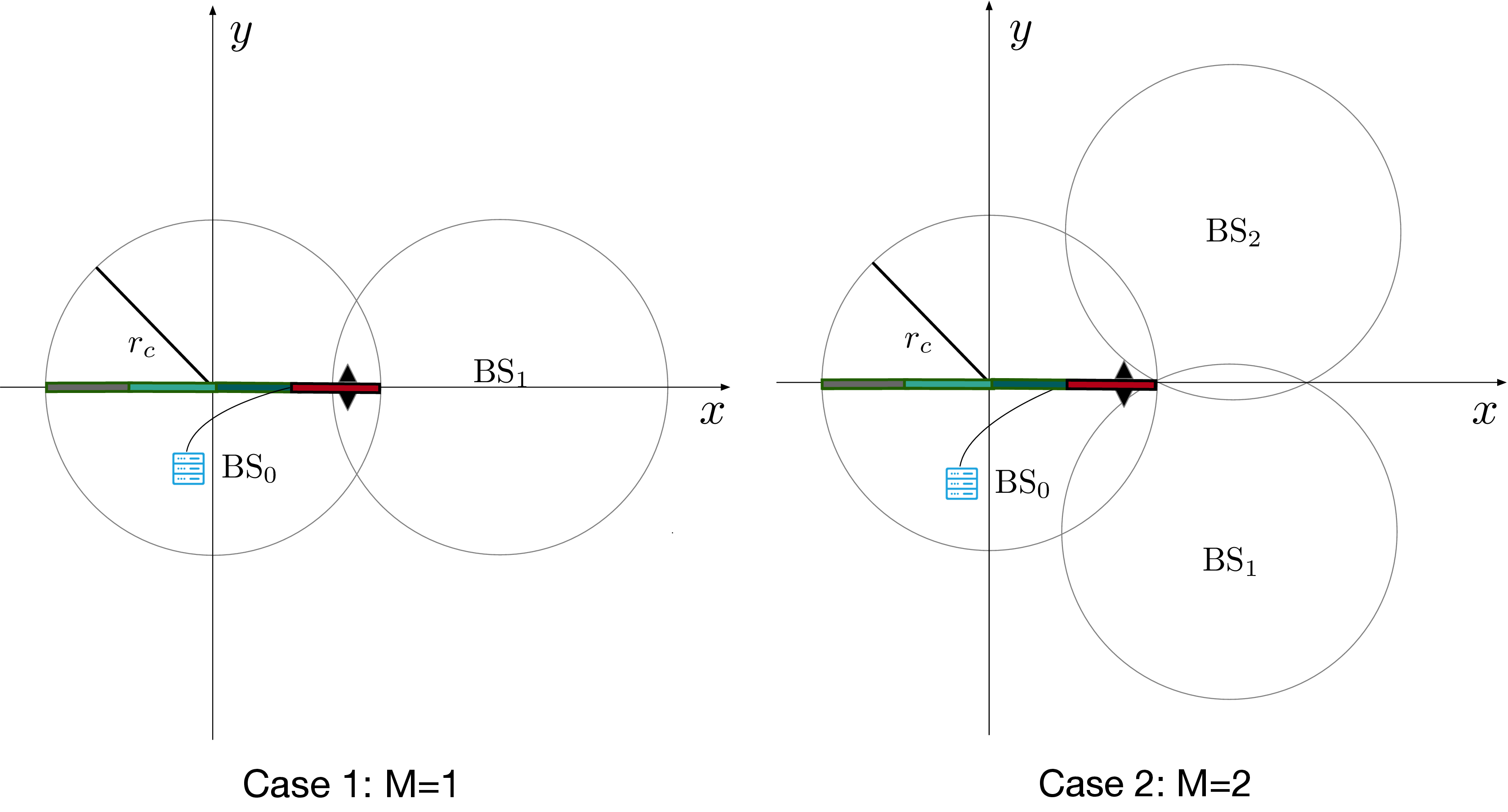, width=0.4\textwidth, clip=}\vspace{-0.5em}
\caption{Illustration of the two considered network topologies. In Case I, there are two cells, i.e., $M=1$, where both ${\rm U}_0$ and ${\rm U}_1$ are uniformly distributed in the overlapping area between the two cells. In Case II, there are three cells, i.e., ${M=2}$, where ${\rm U}_m$ is uniformly distributed in the overlapping area between ${\rm Cell}_0$ and ${\rm Cell}_m$, $m\in\{1,2\}$. For the strategy considered in Section \ref{section III}, there is a single user (${\rm U}_0$) served by the pinching antenna in ${\rm Cell}_0$, and the location of ${\rm U}_0$ is generated in the same manner as in Case I. For the strategy considered in Section \ref{section IV}, each offloaded user, ${\rm U}_m$, is paired with an existing user (${\rm U}_{0m}$), where ${\rm U}_{0m}$ and ${\rm U}_{m}$ are uniformly distributed in the same overlapping area.  
  \vspace{-1em}    }\label{figsim0}   \vspace{-1em} 
\end{figure}

\begin{lemma}\label{lemma3}
For the considered special case, the minimal transmit power can be achieved by asking ${\rm U}_{\rm s}$ to carry out SIC, and the minimal transmit power is given by
\begin{align}\label{lemma3mi}
P^{\rm all*} =&   
  \frac{\epsilon 2^R}{2^R+1}    \left(    x_{\rm s}- x_{\rm w}  \right)^2+ \epsilon\left(2^R y_{\rm s}^2 +y_{\rm w}^2\right) +\epsilon(2^R+1)d^2,
\end{align} and  the optimal  pinching antenna location is given by
\begin{align}\label{xpinlemma2}
x^{\rm Pin *}  =
\frac{  \left(  2^R     x_{\rm s}+ x_{
\rm w}\right) }{ 2^R+1 } . 
\end{align} 

\end{lemma}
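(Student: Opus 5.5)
The plan is to solve the two $M=1$ problems, \eqref{pb:9} (${\rm U}_{01}$ performs SIC) and \eqref{pb:16} (${\rm U}_{1}$ performs SIC), separately and then compare their optimal values. By the standing assumption, constraints \eqref{const 9 :9} and \eqref{const 16 :1} are inactive. Each problem therefore reduces to minimizing a one-dimensional convex quadratic in $x^{\rm Pin}$, subject only to the SIC-order constraint. In both SIC orders the cost has the same structure. The user that performs SIC has its distance term weighted by $1$, and the other user's term is weighted by $2^R$. To see this, note that in problem \eqref{pb:9} the cost is $2^R|\bm\psi_{01}-\bm\psi^{\rm Pin}|^2+|\bm\psi_1-\bm\psi^{\rm Pin}|^2$ up to the factor $\epsilon$, while in \eqref{pb:15} it is $2^R|\bm\psi_{1}-\bm\psi^{\rm Pin}|^2+|\bm\psi_{01}-\bm\psi^{\rm Pin}|^2$.

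I would write the generic problem as follows. Let ${\rm U}_a$ denote the decoding (SIC) user and ${\rm U}_b$ the other user. The problem is to minimize $\epsilon\big(2^R[(x-x_b)^2+y_b^2+d^2]+[(x-x_a)^2+y_a^2+d^2]\big)$ subject to $(x-x_b)^2+y_b^2\le (x-x_a)^2+y_a^2$; here I use $y^{\rm Pin}=0$. The unconstrained minimizer is $x^\star=(2^Rx_b+x_a)/(2^R+1)$. Completing the square gives the minimum value $\epsilon\frac{2^R}{2^R+1}(x_a-x_b)^2+\epsilon(2^Ry_b^2+y_a^2)+\epsilon(2^R+1)d^2$, using the identity $\min_x[\alpha(x-u)^2+\beta(x-v)^2]=\frac{\alpha\beta}{\alpha+\beta}(u-v)^2$.

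The next step is to pick the better SIC order. The first two terms and the $d^2$ term are symmetric in the labelling $(a,b)$. The orders therefore differ only through $2^Ry_b^2+y_a^2$, which is smaller when $|y_b|\le|y_a|$. So $b$ should be the strong user ${\rm U}_{\rm s}$, in the sense defined before the lemma. This yields \eqref{lemma3mi} and \eqref{xpinlemma2} with $x_b=x_{\rm s}$ and $x_a=x_{\rm w}$, i.e., the weight $2^R$ sits on ${\rm U}_{\rm s}$. Here I must reconcile this with the lemma's wording "${\rm U}_{\rm s}$ carries out SIC". In problem \eqref{pb:9}, the decoding user ${\rm U}_{01}$ sits closer to the antenna and carries the $2^R$ weight, because its own power $P_{01}$ enters $P_{11}$ with factor $(2^R-1)$. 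So in each formulation the $2^R$-weighted user is the one performing SIC. I will recheck my labelling of $a$ and $b$ against this before writing up, and the final assignment should be: SIC user $=$ $2^R$-weighted $=$ ${\rm U}_{\rm s}$.

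The main obstacle is to confirm that the unconstrained minimizer satisfies the SIC-order constraint of the chosen order. The constraint requires the SIC user to be closer to the antenna. Take the SIC user to be ${\rm U}_{\rm s}$, and compare $f(x)=(x-x_{\rm w})^2+y_{\rm w}^2-(x-x_{\rm s})^2-y_{\rm s}^2$ at $x^\star$. The horizontal offsets are $|x^\star-x_{\rm s}|=\frac{|x_{\rm s}-x_{\rm w}|}{2^R+1}$ and $|x^\star-x_{\rm w}|=\frac{2^R|x_{\rm s}-x_{\rm w}|}{2^R+1}$. The first is at most the second since $2^R\ge1$, and $y_{\rm s}^2\le y_{\rm w}^2$. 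Hence $f(x^\star)\ge0$, so the constraint holds and the unconstrained optimum is feasible and hence optimal for the better order. For the other order, its constrained optimum is at least its unconstrained value, which is already no smaller than that of the chosen order, so the comparison is unaffected. Combining these gives the lemma.
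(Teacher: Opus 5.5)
Your proposal is correct and follows essentially the paper's route. For each SIC order you drop the ordering constraint and minimize the weighted quadratic, which gives $x^{\rm Pin*}=(2^R x_{\rm s}+x_{\rm w})/(2^R+1)$, and you check that the ordering constraint holds there when ${\rm U}_{\rm s}$ is the SIC user. You also compare the two orders explicitly, which the paper's appendix only implies: the other order's constrained value is at least its unconstrained value, and that exceeds yours by $\epsilon(2^R-1)(y_{\rm w}^2-y_{\rm s}^2)\geq 0$. When writing up, delete the early claim that the SIC user gets weight $1$ and the label ``${\rm U}_a$ = SIC user''. As you later note yourself, in both \eqref{pb:9} and \eqref{pb:16} the SIC user carries weight $2^R$ and must be closer to the antenna, so ${\rm U}_b$ is the SIC user. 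Your constraint and feasibility check are already written with that orientation.
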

\begin{proof}
See Appendix \ref{proof3}.
\end{proof}
{\it Remark 5:} Lemma \ref{lemma3} shows that, for the considered two-cell special case, multiple resource allocation problems based on different SIC decoding orders are not needed. In particular, there is a dominant SIC order, which leads to the minimal overall transmit power. Furthermore, the closed-form solution shown in \eqref{xpinlemma2} reveals that the optimal antenna location is a function of the projections of the users' positions onto the waveguide only. However, we note that the distances between the users and the waveguide, i.e., $y_m$ and $y_{0m}$, still affect the optimal antenna location by determining which SIC decoding order is dominant.

%
%

\section{Numerical Results}\label{section V}
 In this section, the performance of pinching-antenna-assisted traffic offloading is evaluated by using computer simulations. The network topologies shown in Fig. \ref{figsim0} are used for the conducted simulations. Throughout this section, it is assumed that  $f_c=28$ GHz, $d=3$ m, $N=4$, and the noise power is $-70$ dBm. The cell radius is denoted by $r_c$, and the distance between ${\rm BS}_0$ and ${\rm BS}_m$, $1\leq m \leq M$, is denoted by $r_d=\sqrt{3}r_c$\footnote{The choice of $r_d=\sqrt{3}r_c$ ensures that there is a single intersection between the three circles in Fig. \ref{figsim0}.}. The waveguide in ${\rm Cell}_0$ is placed on the $x$ axis of the plane shown in Fig. \ref{figsim0}, and its length is $2r_c$. Furthermore, as illustrated in Fig. \ref{fig1b}, the rightmost waveguide segment is the focus of interest, and hence $ x_m\geq \frac{N-2}{N}r_c$, $0\leq m \leq M$, i.e., the scheduled users should not fall to the left-hand side of the considered segment.

   \begin{figure}[!] \vspace{-2em}
\begin{center}
\subfigure[ Case I with $r_c=40$ m ]{\label{figsim1a}\includegraphics[width=0.321\textwidth]{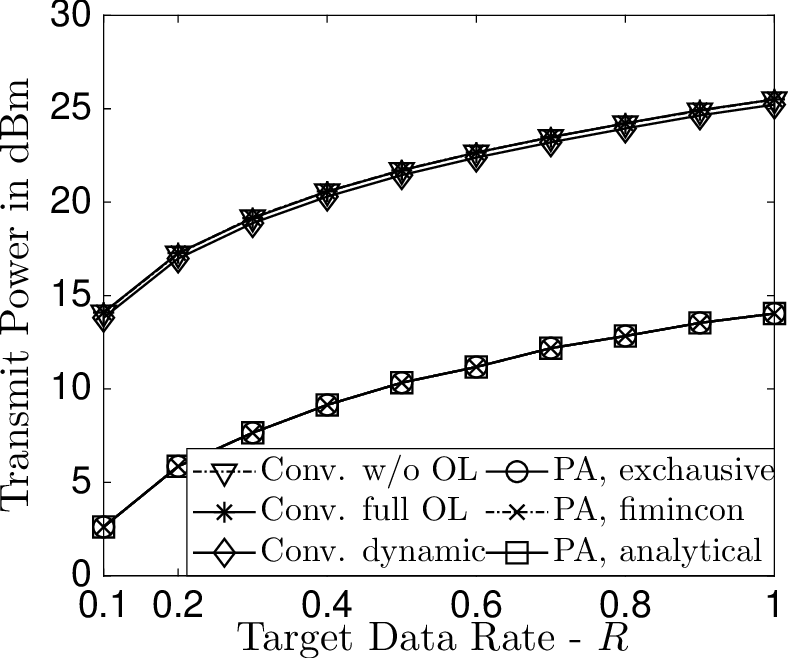}} 
\subfigure[Case I with  $r_c=80$ m ]{\label{figsim1b}\includegraphics[width=0.321\textwidth]{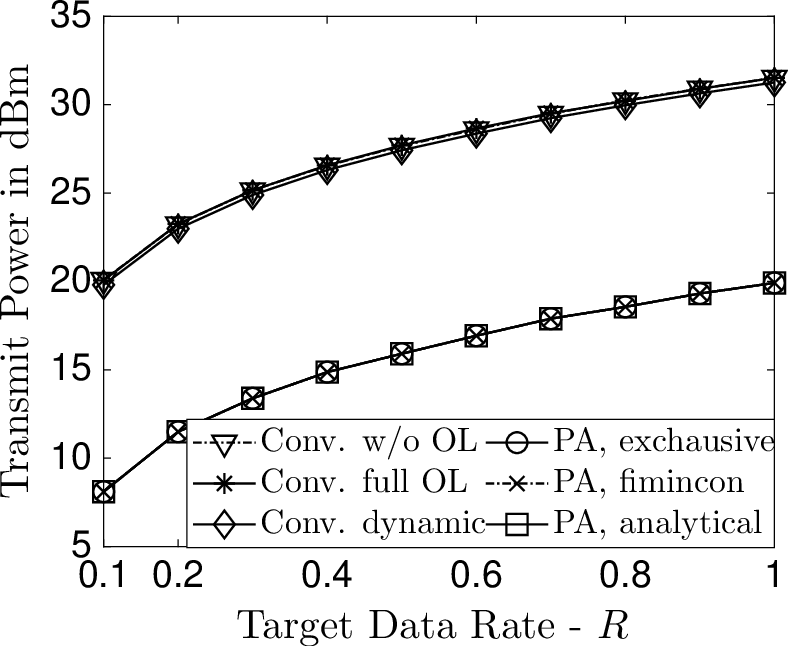}}   \vspace{-1em}
\end{center}
\caption{Impact of pinching antennas on the transmit power for the considered traffic offloading schemes, where Case I is focused on and the scenario in Section \ref{section III} is considered.     \vspace{-1em} }\label{figsim1}\vspace{-1.2em}
\end{figure}

Figs. \ref{figsim1}, \ref{figsim2}, and \ref{figsim3} focus on the case in which ${\rm U}_m$ keeps using its subcarrier in ${\rm Cell}_m$, i.e., the scenario considered in Section \ref{section III}. In particular, Fig.  \ref{figsim1} compares the transmit powers required by different traffic offloading schemes in order to achieve the target data rate $R$, where Case I shown in Fig. \ref{figsim0} is focused on. In order to facilitate the performance analysis, three types of conventional-antenna-based schemes are used for benchmarking, where the first one does not employ traffic offloading, the second one forces ${\rm BS}_0$ to serve ${\rm U}_m$, $m\in \{1, \cdots, M\}$, and the third one is based on dynamic traffic offloading to realize the minimum between the transmit powers achieved by the first two benchmarking schemes. As can be seen from the figure, the use of pinching antennas can significantly reduce the transmit power for traffic offloading. For example, for Case I with $r_c=40$ m, the performance gain of the pinching-antenna scheme over the conventional ones  is more than $10$ dBm, i.e., the use of pinching antennas can reduce the transmit power by more than a factor of $10$. For the optimization problem shown in \eqref{pb:3}, the optimal solution can be obtained by applying an exhaustive search, an off-the-shelf optimization solver, such as Matlab fmincon, as well as in a closed-form expression provided in \eqref{analyical 1} for the special case of $M=1$. Fig. \ref{figsim1} shows that the performance achieved with the analytical solution is identical to that obtained with fmincon and an exhaustive search, which verifies the optimality of the obtained analytical result shown in \eqref{analyical 1}.

   \begin{figure}[!] \vspace{-2em}
\begin{center}
\subfigure[ Case II with $r_c=40$ m ]{\label{figsim2a}\includegraphics[width=0.321\textwidth]{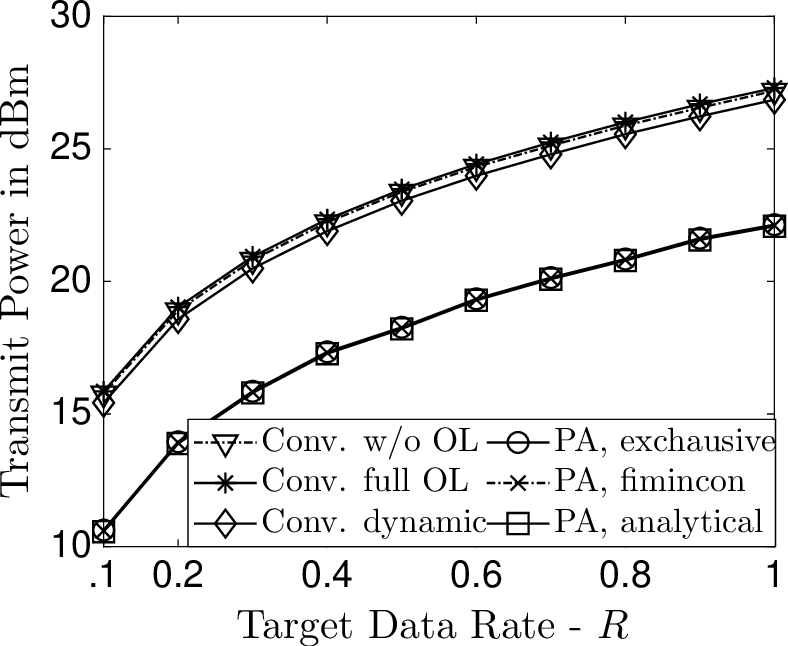}} 
\subfigure[Case II with $r_c=80$ m ]{\label{figsim2b}\includegraphics[width=0.321\textwidth]{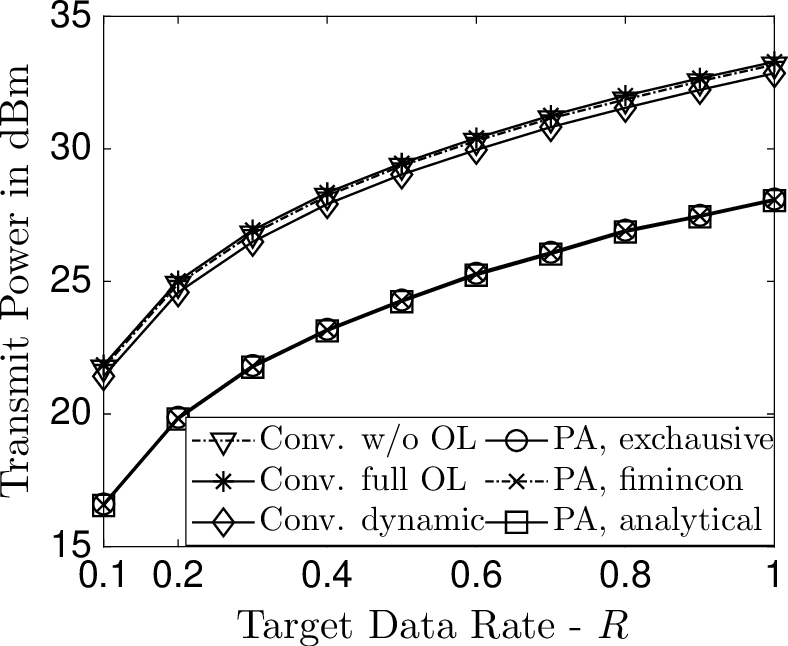}}   \vspace{-1em}
\end{center}
\caption{ Impact of pinching antennas on the transmit power for the considered traffic offloading schemes, where Case II is focused on and the scenario in Section \ref{section III} is considered.   \vspace{-1em} }\label{figsim2}\vspace{-1em}
\end{figure}

Fig. \ref{figsim2} focuses on Case II, i.e., there are three cells in the network and the users from ${\rm Cell}_1$ and ${\rm Cell}_2$ are offloaded and served by ${\rm BS}_0$. Consistent with Fig. \ref{figsim1}, Fig. \ref{figsim2} demonstrates that the use of pinching antennas can significantly reduce the transmit power required for traffic offloading. However, comparing Fig. \ref{figsim2} and Fig. \ref{figsim1}, one can observe that the performance gain of pinching-antenna systems over conventional-antenna systems decreases by increasing $M$. The reason for this performance degradation is that a single antenna is activated to serve multiple users from the $M+1$ cells. For the case with a small $M$, the location of the pinching antenna is tailored to the served users; however, with a large $M$, the adopted antenna location becomes a compromise choice, i.e., the adopted location might not be ideal for any individual user. 

   \begin{figure}[!] \vspace{-2em}
\begin{center}
\subfigure[ Conventional antennas ]{\label{figsim3a}\includegraphics[width=0.321\textwidth]{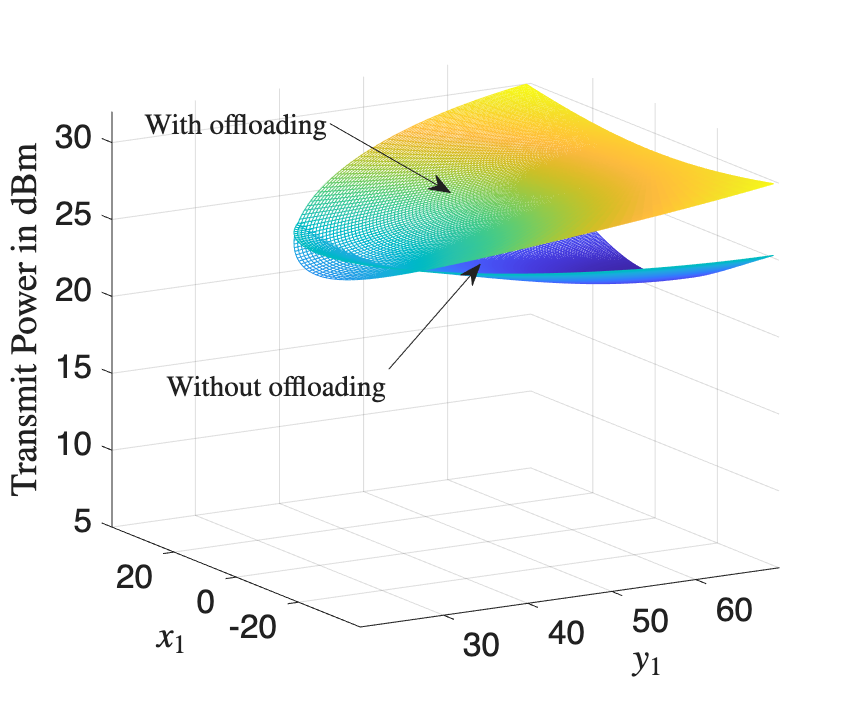}} 
\subfigure[Pinching antennas ]{\label{figsim3b}\includegraphics[width=0.321\textwidth]{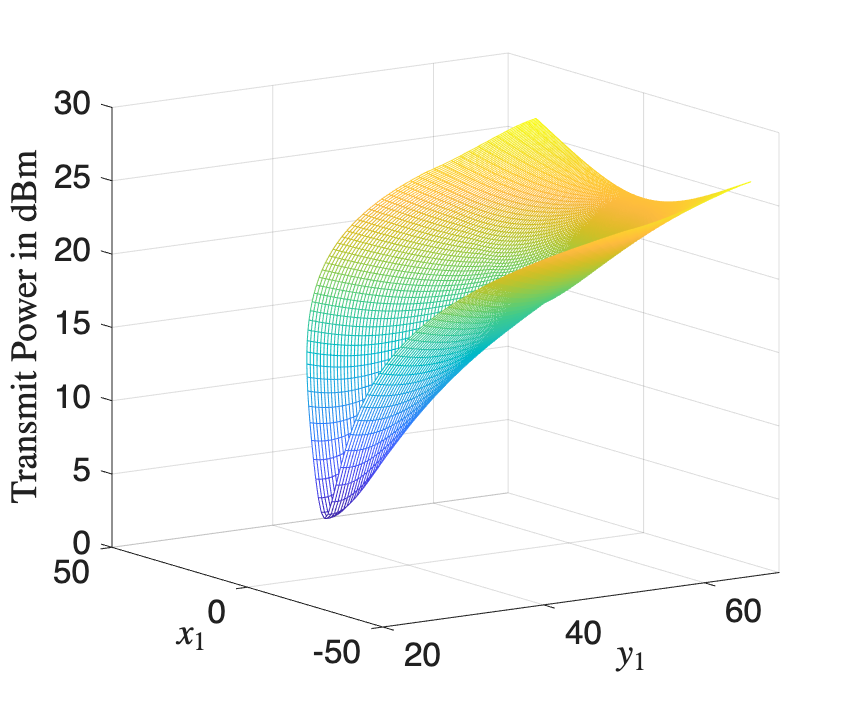}}   \vspace{-1em}
\end{center}
\caption{Impact of pinching antennas on the transmit power for the considered traffic offloading schemes, where Case I is focused on and the scenario in Section \ref{section III} is considered.  Unlike Fig. \ref{figsim1}, the location of ${\rm U}_0$ is fixed at the middle of the considered segment, i.e., $
\left(
\frac{N-1}{N}r_c, 0,0
\right)$, and ${\rm U}_1$ is assumed to be within a left-hand-side semicircular region of radius $r_c$ and its center at ${\rm BS}_1$, where $r_c=40$ m.    \vspace{-1em} }\label{figsim3}\vspace{-1em}
\end{figure}

Fig. \ref{figsim3} illustrates the performance gain of pinching-antenna systems over conventional-antenna systems with deterministic user locations. The key observation from Fig. \ref{figsim3} is that the performance gain of pinching antennas depends on ${\rm U}_1$'s location. In particular, if ${\rm U}_1$ is a cell edge user, the performance gain of pinching-antenna systems over conventional-antenna systems is enormous. The reason for this significant performance gain is that with conventional antennas, a cell-edge user is far away from both ${\rm BS}_0$ and ${\rm BS}_1$. Therefore, the user has to be served with a large transmit power, with or without traffic offloading. However, due to its flexibility, the pinching antenna can be placed close to the cell-edge user, which means that the user can served by ${\rm BS}_0$ with low transmit power. 

  \begin{figure}[t] \vspace{-2em}
\begin{center}
\subfigure[ Case I with $r_c=40$ m ]{\label{figsim4a}\includegraphics[width=0.321\textwidth]{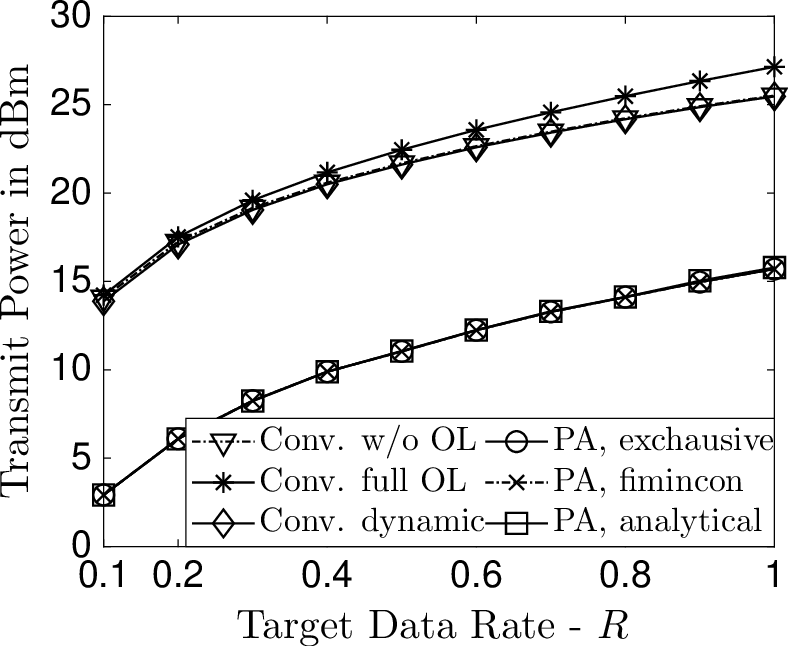}} 
\subfigure[Case I with $r_c=80$ m ]{\label{figsim4b}\includegraphics[width=0.321\textwidth]{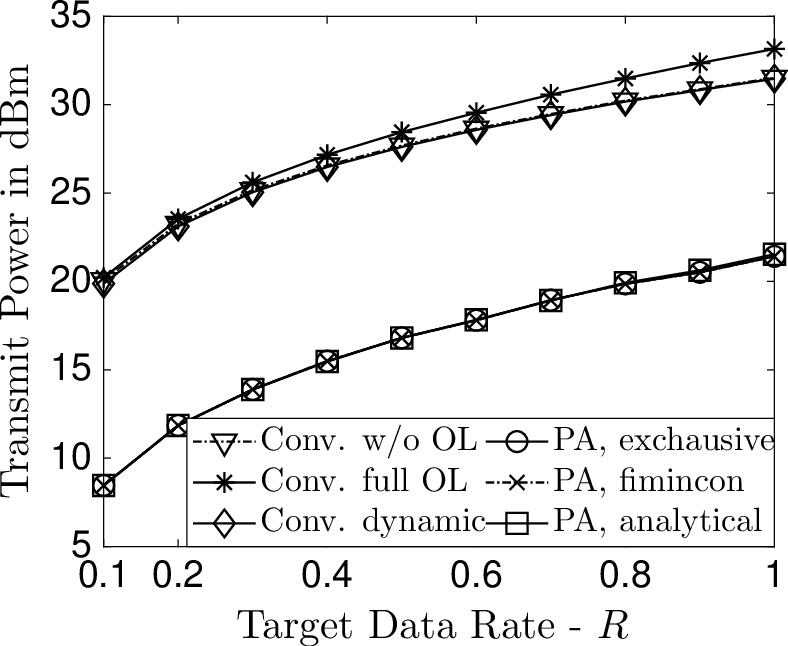}}   \vspace{-1em}
\end{center}
\caption{Impact of pinching antennas on the transmit power for the considered traffic offloading schemes, where Case I is focused on and the scenario in Section \ref{section IV} is considered.  \vspace{-1em} }\label{figsim4}\vspace{-1em}
\end{figure}
Figs. \ref{figsim4}, \ref{figsim5}, and \ref{figsim6} focus on the scenario in which ${\rm U}_m$ releases its subcarrier in ${\rm Cell}_m$, i.e., the scenario considered in Section \ref{section IV}. The two network topologies shown in Fig. \ref{figsim0} are still used.   Fig. \ref{figsim4} focuses on Case I, where there are two cells. Consistent with the previous results, Fig. \ref{figsim4} shows that using pinching antennas significantly reduces the transmit power, which is due to the fact that ${\rm BS}_0$ can place its pinching antenna close to the user to be offloaded from ${\rm Cell}_1$ to ${\rm Cell}_0$. We note that for the special case of $M=1$, closed-form expressions for the optimal solutions of power allocation and antenna placement can be obtained as shown in Lemma \ref{lemma3}. The fact that the analytical solution yields the same performance as the schemes based on an exhaustive search and optimization solvers verifies the optimality of the analytical solution. 

Fig. \ref{figsim5} focuses on Case II, where there are three cells. While pinching-antenna systems can still outperform the benchmarking schemes, the performance gain of pinching antennas over conventional antennas reduces by increasing $M$, a phenomenon consistent with the previously presented simulation results. Fig. \ref{figsim6} shows the performance gain of pinching-antenna systems over conventional-antenna systems for deterministic user locations. Similar to Fig. \ref{figsim3}, Fig. \ref{figsim6} shows that users close to the cell boundary benefit significantly from traffic offloading in pinching-antenna systems. Unlike Figs. \ref{figsim1}-\ref{figsim3}, Figs. \ref{figsim4}-\ref{figsim6} show that if conventional antennas are used, the scheme with complete traffic offloading suffers the worst performance, which means that with conventional antennas, traffic offloading almost never happens. However, the use of pinching antennas makes complete traffic offloading a preferred choice, which is important in practice. Take the scenario considered in Section \ref{section III} as an example. Complete traffic offloading is particularly meaningful, where the use of pinching antennas ensures that ${\rm U}_m$, $1\leq m\leq M$, can be offloaded to ${\rm Cell}_0$ and their bandwidth in ${\rm Cell}_m$ can be released to support additional users. In addition, pinching-antenna assisted traffic offloading requires lower transmit power and hence supports greener communications, compared to the benchmarking schemes.  

   \begin{figure}[t] \vspace{-2em}
\begin{center}
\subfigure[ Case II with $r_c=40$ m ]{\label{figsim5a}\includegraphics[width=0.321\textwidth]{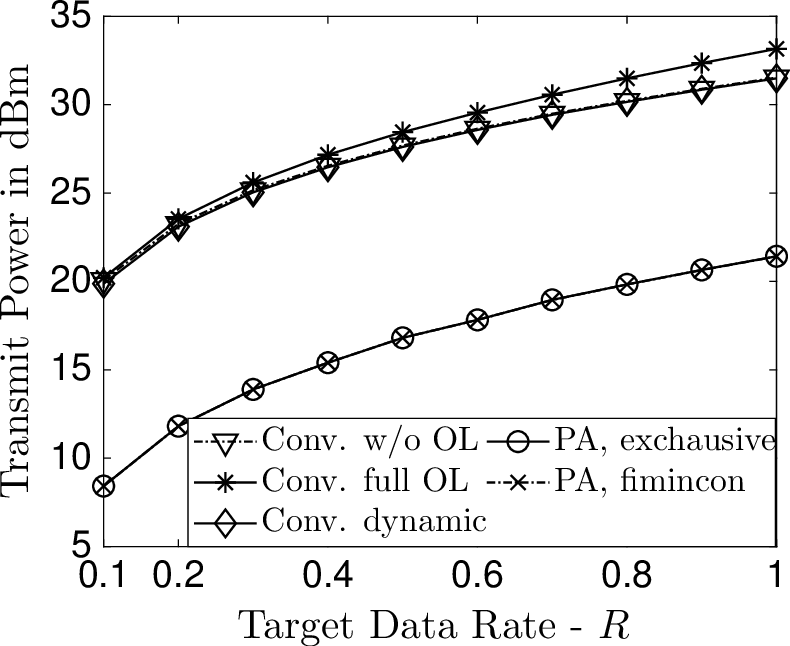}} 
\subfigure[Case II with $r_c=80$ m ]{\label{figsim5b}\includegraphics[width=0.321\textwidth]{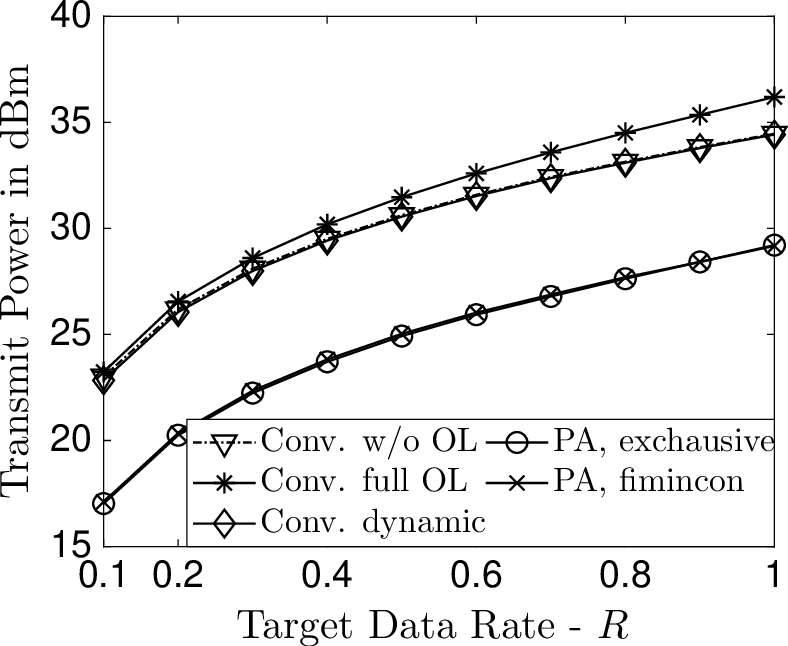}}   \vspace{-1em}
\end{center}
\caption{ Impact of pinching antennas on the transmit power for the considered traffic offloading schemes, where Case II is focused on and the scenario in Section \ref{section IV} is considered.   \vspace{-1em} }\label{figsim5}\vspace{-1em}
\end{figure}

   \begin{figure}[t] \vspace{-2em}
\begin{center}
\subfigure[Conventional antennas]{\label{figsim6a}\includegraphics[width=0.321\textwidth]{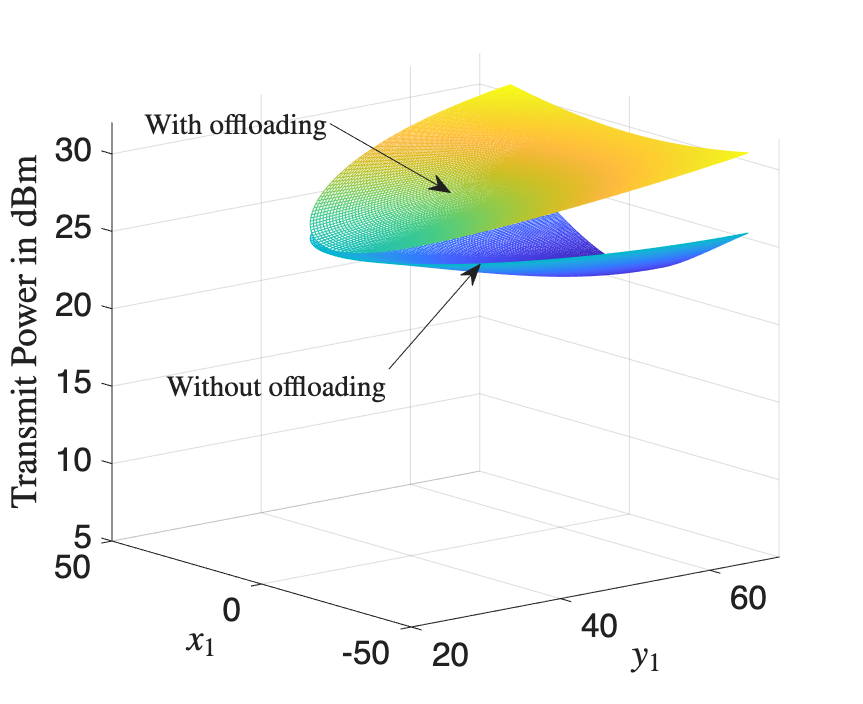}} 
\subfigure[Pinching antennas]{\label{figsim6b}\includegraphics[width=0.321\textwidth]{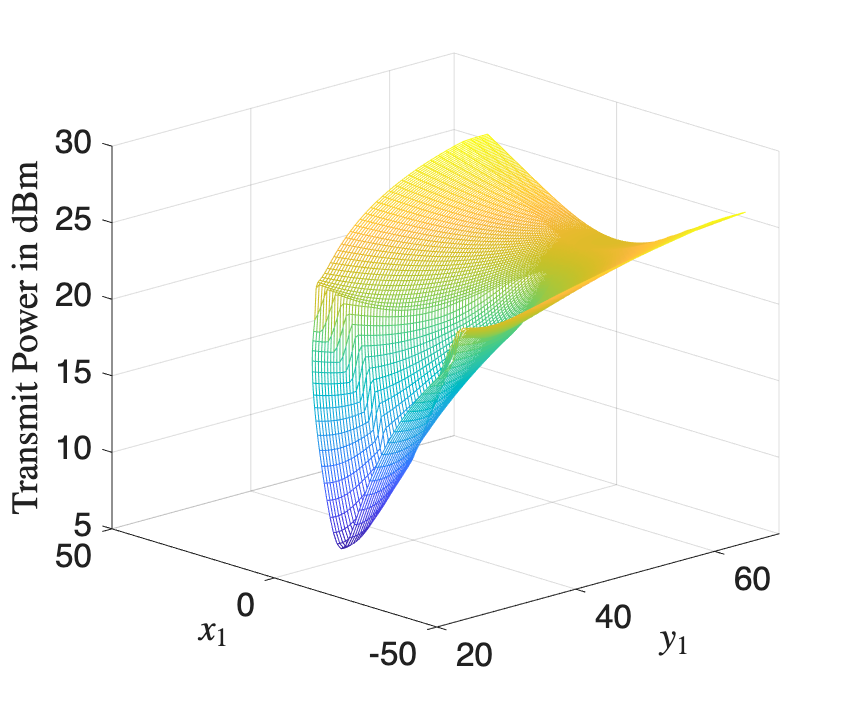}}   \vspace{-1em}
\end{center}
\caption{ Impact of pinching antennas on the transmit power for the considered traffic offloading schemes, where Case I is focused on and the scenario in Section \ref{section IV} is considered.  The same system parameter values as in Fig. \ref{figsim3} are used.   \vspace{-1em} }\label{figsim6}\vspace{-1em}
\end{figure}
\vspace{-1em}
\section{Conclusions}\label{section VI}
In this paper, the key features of pinching antennas, i.e., establishing strong connections between transceivers and making the physical boundaries of wireless cells reconfigurable, have been used to facilitate traffic offloading. An overall transmit power minimization problem has been formulated, where the optimal transmit powers and antenna locations have been obtained. Our analytical and simulation results demonstrate that the use of pinching antennas can efficiently support traffic offloading, achieve low energy consumption, and realize balanced cell resource utilization.

In this paper, it was assumed that only a single cell is equipped with a pinching-antenna system. The application of pinching antennas to general multi-cell scenarios, such as cloud radio access networks (C-RAN), fog radio access networks (F-RAN), and cell-free massive multiple-input multiple-output (MIMO) networks, is an important direction for future research \cite{CMCC,Dingpoor1311,7513863, 7827017}. 
\appendices
\section{Proof for Lemma \ref{lemma2}} \label{proof2}
 Problem \eqref{pb:13} is a convex optimization problem, and hence its optimal solution can be obtained based on the KKT conditions. The Lagrangian of problem \eqref{pb:13} is given by
\begin{align}
\mathcal{L} =&   P_{0m}+P_{mm}+P_{m} -\lambda_1P_{0m} - \lambda_2P_{mm}-\lambda_3P_m\\\nonumber&+
\lambda_4\left( \left(2^R-1\right)   g_{0m}P_{mm} - g_{0m}P_{0m}  +\left(2^R-1\right)\right)\\\nonumber
&+\lambda_5\left(
R-\log_2\left(
1+   g_{mm}  P_{mm}  
\right) -\log_2\left(
1+  g_m P_m 
\right)
\right),
\end{align}
 where $\lambda_m$, $1\leq m \leq 5$, denote the Lagrangian multipliers.

 The corresponding KKT conditions for problem \eqref{pb:13} are given by
 \begin{align}\label{kktcccc}
\left\{\begin{array}{l} 
 1  -\lambda_1   -
\lambda_4  g_{0m}  =0\\
1  - \lambda_2 +
\lambda_4  \left(2^R-1\right)   g_{0m}  - \frac{g_{mm}\lambda_5}{
\left(1+   g_{mm}  P_{mm}  \right)\log 2}  =0\\
1 -\lambda_3  -\frac{g_m\lambda_5 }{\left(1+  g_m P_m\right) \log 2}=0\\
\lambda_4\left( \left(2^R-1\right)   g_{0m}P_{mm} - g_{0m}P_{0m}  +\left(2^R-1\right)\right)=0\\
\lambda_5\left(
R-\log_2\left(
1+   g_{mm}  P_{mm}  
\right) -\log_2\left(
1+  g_m P_m 
\right)
\right)=0\\
\lambda_1P_{0m}=0,  \lambda_2P_{mm}=0,  \lambda_3P_m=0\\
\left(2^R-1\right)   g_{0m}P_{mm} - g_{0m}P_{0m}  +\left(2^R-1\right)
 \leq 0 \\  
R-\log_2\left(
1+   g_{mm}  P_{mm}  
\right) -\log_2\left(
1+  g_m P_m 
\right) \leq 0\\ P_{mm}\geq 0,P_m\geq 0, P_{0m}\geq 0
 \end{array}\right.,
\end{align}
where $\log(\cdot)$ denotes the natural logarithm. 
It is straightforward to show that $P_{0m}\neq 0$, which means that $\lambda_1=0$. The fact that $\lambda_1=0$ leads to the conclusions that  $\lambda_4=\frac{1}{g_{0m}}$ and $\left(2^R-1\right)   g_{0m}P_{mm} - g_{0m}P_{0m}  +\left(2^R-1\right)=0$. The discussions in Section \ref{section III} lead to the intuition that there could be three cases, which correspond to ${\rm U}_m$'s offloading strategies and hence are analyzed separately in the following.

\subsubsection{$P_m\neq0$ and $P_{mm}\neq0$}
In this case, ${\rm U}_m$ is simultaneouslly served by both ${\rm BS}_m$ and ${\rm BS}_0$. The KKT conditions shown in \eqref{kktcccc} can be simplified as follows:
 \begin{align}\label{kktcccc}
\left\{\begin{array}{l}  
 2^R   
     - \frac{g_{mm}\lambda_5}{
\left(1+   g_{mm}  P_{mm}\right)\log 2  }  =0,
1   -\frac{g_m\lambda_5 }{\left(1+  g_m P_m\right)\log 2}=0\\
  \left(2^R-1\right)   g_{0m}P_{mm} - g_{0m}P_{0m}  +\left(2^R-1\right) =0\\
\lambda_5\left(
R-\log_2\left(
1+   g_{mm}  P_{mm}  
\right) -\log_2\left(
1+  g_m P_m 
\right)
\right)=0\\
\lambda_1 =0,  \lambda_2 =0,  \lambda_3 =0,\lambda_4=\frac{1}{g_{0m}}  \\  
R-\log_2\left(
1+   g_{mm}  P_{mm}  
\right) -\log_2\left(
1+  g_m P_m 
\right) \leq 0\\ P_{mm}\geq 0,P_m\geq 0, P_{0m}\geq 0
 \end{array}\right..
\end{align}  
 It is straightforward to show that $\lambda_5\neq 0$, which means the three transmit power solutions can be obtained from the following linear equations:
  \begin{align}\label{kktcccc}
\left\{\begin{array}{l}  
  g_{mm}\left(1+  g_m P_m\right)    = g_m 2^R\left(
1+   g_{mm}  P_{mm} \right) \\
  \left(2^R-1\right)   g_{0m}P_{mm} - g_{0m}P_{0m}  +\left(2^R-1\right) =0\\
R-\log_2\left(
1+   g_{mm}  P_{mm}  
\right) -\log_2\left(
1+  g_m P_m 
\right)
=0 
 \end{array}\right.,
\end{align}  
  which leads to the following solutions:
\begin{align}\nonumber
P_{mm}=&\frac{1}{g_{mm}}\left(\sqrt{\frac{g_{mm}}{g_m}}-1\right),
\quad
P_m=\frac{1}{g_m}\left(2^R\sqrt{\frac{g_m}{g_{mm}}}-1\right),
 \\ \label{solutionx1}
P_{0m}=&\frac{2^R-1}{g_{0m}}\bigl(g_{0m}P_{mm}+1\bigr).
\end{align}
 The solutions in \eqref{solutionx1} yields $ 
\lambda_5=\frac{2^R\log 2 }{\,\sqrt{g_m g_{mm}}}\geq0
$, and therefore, the condition for the solutions in \eqref{solutionx1} to be optimal is $P_{mm}> 0$ and $P_m> 0$, i.e.,
\begin{align}
\sqrt{\frac{g_{mm}}{g_m}}-1> 0, \quad 2^R\sqrt{\frac{g_m}{g_{mm}}}-1> 0,
\end{align}
which means 
\begin{align}\label{condition1}
 1< \frac{g_{mm}}{g_m} < 2^{2R}.
\end{align}

\subsubsection{$P_m=0$ and $P_{mm}\neq0$}
This case corresponds to the scenario, where ${\rm U}_m$ is completely offloaded from ${\rm Cell}_m$ to ${\rm Cell}_0$. In this case, the KKT conditions can be simplified as follows:
 \begin{align}\label{kktcccc}
\left\{\begin{array}{l}  
2^R        - \frac{g_{mm}\lambda_5}{\left(
1+   g_{mm}  P_{mm} \right)\log 2 }  =0\\
1 -\lambda_3  -\frac{g_m\lambda_5}{\log 2} =0\\
  \left(2^R-1\right)   g_{0m}P_{mm} - g_{0m}P_{0m}  +\left(2^R-1\right) =0\\
\lambda_5\left(
R-\log_2\left(
1+   g_{mm}  P_{mm}  
\right)  
\right)=0\\
\lambda_1 =0,  \lambda_2 =0,   P_m=0,\lambda_4=\frac{1}{g_{0m}}\\ 
R-\log_2\left(
1+   g_{mm}  P_{mm}  
\right)   \leq 0\\ P_{mm}\geq 0,P_m= 0, P_{0m}\geq 0
 \end{array}\right..
\end{align}
It is straightforward to show that $\lambda_5\neq 0$, which leads to the following conclusions:
\begin{align}\label{kktcccc2}
\left\{\begin{array}{l}  
 P_{0m} =\left(2^R-1\right)\left[ \left(2^R-1\right)   \frac{1}{g_{mm}}   +\frac{1}{ g_{0m}}\right]  \\P_{mm} =\frac{2^{R}-1}{g_{mm}} ,P_{m}=0\\
\lambda_1 =0,  \lambda_2 =0,   P_m=0,\lambda_4=\frac{1}{g_{0m}} \\
\lambda_5=\frac{2^{2R}\log 2}{g_{mm}}  , 
\lambda_3=1   -g_m \frac{2^{2R}}{g_{mm} } 
 \end{array}\right..
\end{align}
Therefore, the condition for the optimality of this case is given by
\begin{align}\label{condition2}
\lambda_3\geq 0\Longrightarrow    \frac{g_{mm}}{g_m}\geq  2^{2R}. 
\end{align}

\subsubsection{$P_m\neq 0$ and $P_{mm}=0$}
This case corresponds to the scenario where ${\rm U}_m$ chooses to stay in ${\rm Cell}_m$. In this case, the KKT conditions can be simplified as follows:
 \begin{align}\label{kktcccc}
\left\{\begin{array}{l}  
2^R  - \lambda_2       -  \frac{g_{mm}\lambda_5}{\log 2}  =0\\
1   -\frac{g_m\lambda_5}{\left(1+  g_m P_m\right)\log 2}=0\\
  - g_{0m}P_{0m}  +\left(2^R-1\right) =0\\
\lambda_5\left(
R -\log_2\left(
1+  g_m P_m 
\right)
\right)=0\\
\lambda_1 =0,  \lambda_2P_{mm}=0,  \lambda_3 =0,\lambda_4=\frac{1}{g_{0m}}\\  
R  -\log_2\left(
1+  g_m P_m 
\right) \leq 0\\ P_{mm}= 0,P_m\geq 0, P_{0m}\geq 0
 \end{array}\right..
\end{align}
By using the fact that $\lambda_5\neq 0$, the optimal solution can be obtained as follows:
\begin{align}\label{kktcccc3}
\left\{\begin{array}{l}   
 P_{0m} =\frac{2^R-1}{g_{0m}} ,
P_m = \frac{2^R-1}{g_m} ,P_{mm}=0\\
\lambda_1 =0,   \lambda_3 =0,\lambda_4=\frac{1}{g_{0m}} 
\\   \lambda_2       =  2^R\left(1-\frac{ g_{mm} }{g_m } \right) ,
\lambda_5= \frac{2^R\log 2}{g_m }
 \end{array}\right..
\end{align}
Therefore, the condition for the optimality of this case is given by
\begin{align}\label{condition3}
 \lambda_2 \geq 0 \Longrightarrow  \frac{ g_{mm} }{g_m }  \leq 1. 
\end{align}
By combining \eqref{solutionx1}, \eqref{condition1}, \eqref{kktcccc2}, \eqref{condition2}, \eqref{kktcccc3}, \eqref{condition3}, the proof for Lemma \ref{lemma2} is complete.

\section{Proof for Lemma \ref{lemma3}}\label{proof3}
The lemma can be proved by studying the following two cases with different SIC decoding orders. 
\subsubsection{${\rm U}_{01}$ carries out SIC}
\label{subsubsection special 2}
For the considered special case, the constraint for complete offloading shown in \eqref{const 9 :9} is assumed to be always satisfied and hence can be omitted.  To facilitate the analysis of the optimal pinching antenna location, \eqref{const 9 :10} is first ignored, and hence problem \eqref{pb:9} can be simplified as follows:  
  \begin{problem}\label{pb:11} 
  \begin{alignat}{2}
 \underset{x^{\rm Pin}}{\rm{min}}  &\quad    2^R     \left( x_{0m}-x^{\rm Pin}\right)^2  +     \left( x_m-x^{\rm Pin}\right)^2  .
  \end{alignat}
\end{problem}  
Therefore, the optimal solution for the pinching antenna location is given by
\begin{align}\label{optimal1}
x^{\rm Pin*} = \frac{   2^R     x_{0m}+ x_m  }{ 2^R+1 },
\end{align}
which means that the minimal transmit power is given by
\begin{align}\nonumber
P^{\rm all} =&  \epsilon 2^R   \left[   \left(x_{01}- \frac{ \left(  2^R     x_{01}+ x_1\right) }{\left(2^R+1\right)}\right)^2+ y_{01} ^2+d^2\right] \\\nonumber &+   \epsilon \left[ \left(x_{1}- \frac{ \left(  2^R     x_{01}+ x_1\right) }{\left(2^R+1\right)}\right)^2+ y_{1} ^2+d^2\right]\\  =&\nonumber
 \epsilon \frac{2^R}{2^R+1}    \left(    x_{01}- x_1  \right)^2+ \epsilon2^R y_{01} ^2 \\ &+ \epsilon y_{1} ^2 +\epsilon(2^R+1)d^2 .\label{optimal12}
\end{align} 

In the following, we will show that the obtained optimal solution satisfies constraint \eqref{const 9 :10} (or equivalently \eqref{st8:2}), if $|y_{01}|\leq|y_1|$. In particular, with the optimal solution shown in \eqref{optimal1}, the difference between $\left| {\boldsymbol \psi} _{01} - {\boldsymbol \psi}^{\rm Pin}\right|^2$ and $ \left| {\boldsymbol \psi} _1 - {\boldsymbol \psi}^{\rm Pin}\right|^2$ shown in \eqref{st8:2} can be expressed as follows:
\begin{align}\label{sicconstraint1xd}
& \left| {\boldsymbol \psi} _{01} - {\boldsymbol \psi}^{\rm Pin}\right|^2- \left| {\boldsymbol \psi} _1 - {\boldsymbol \psi}^{\rm Pin}\right|^2\\\nonumber =&      \left[   \left(x_{01}- \frac{ \left(  2^R     x_{01}+ x_1\right) }{\left(2^R+1\right)}\right)^2+ y_{01}^2+d^2\right] \\\nonumber &-     \left[ \left(x_{1}- \frac{ \left(  2^R     x_{01}+ x_1\right) }{\left(2^R+1\right)}\right)^2+  y_{1}^2+d^2\right]
\\\nonumber=& 
  \left[   \left(  \frac{ \left(   x_{01}-x_1\right) }{\left(2^R+1\right)}\right)^2+  y_{01} ^2+d^2\right] \\\nonumber &-     \left[ 2^{2R}\left(  \frac{ \left(   x_1-  x_{01}\right) }{\left(2^R+1\right)}\right)^2+  y_{1}^2+d^2\right]
  \\\nonumber=& 
  -\frac{2^R-1}{2R+1} \left(   x_{01}-x_1\right)^2+   y_{01} ^2-  y_{1}^2    .
   \end{align}
   Therefore, the condition   $|y_{01}|\leq|y_1|$ guarantees  the constraint shown in \eqref{st8:2}, i.e., $\left| {\boldsymbol \psi} _{01} - {\boldsymbol \psi}^{\rm Pin}\right|^2\leq\left| {\boldsymbol \psi} _1 - {\boldsymbol \psi}^{\rm Pin}\right|^2$.

\subsubsection{${\rm U}_1$ carries out SIC}
Again, to facilitate the proof, the constraint shown in \eqref{const 16 :2} is ignored first. Since the constraint of \eqref{const 16 :1} is assumed to be satisfied for the considered special case, problem \eqref{pb:16} can be simplified as follows:
 \begin{problem}\label{pb:17} 
  \begin{alignat}{2}
 \underset{ x^{\rm Pin}}{\rm{min}}  &\quad       2^R   \left(
 x^{\rm Pin}-x_1
 \right)^2   +    \left(
 x^{\rm Pin}-x_{01}
 \right)^2  ,
  \end{alignat}
\end{problem}
whose optimal solution is given by
\begin{align}\label{optimal2}
 x^{\rm Pin} =  \frac{   2^R   x_1
    +    x_{01}
  }{ 2^R+1 }.
\end{align}

By using the above pinching antenna location, the minimal transmit power for this case is given by
 \begin{align}\label{optimal22}
 P^{\rm all}=& \epsilon 2^R   \left(x_{1}-  \frac{   2^R   x_1
    +    x_{01}
  }{ 2^R+1 }\right)^2+ \epsilon 2^Ry_{1} ^2+\epsilon 2^R d^2  \\\nonumber &+   \epsilon \left(x_{01}-  \frac{   2^R   x_1
    +    x_{01}
  }{ 2^R+1 }\right)^2+  \epsilon y_{01}^2+\epsilon d^2\\    =&  \epsilon
 \frac{   2^R}{ 2^R+1} \left(   x_{01}-x_1
   \right)^2 
+\epsilon2^R  y_1
^2 
  +\epsilon y_{01}
^2+\epsilon (2^R+1)d^2 .\nonumber
    \end{align}
   
In order to show that the obtained optimal solution satisifies the constraint shown in \eqref{const 16 :2}, we note that with the optimal pinching antenna location, the difference  between the distances, $ \left| {\boldsymbol \psi} _{01} - {\boldsymbol \psi}^{\rm Pin}\right|^2$ and $ \left| {\boldsymbol \psi} _1 - {\boldsymbol \psi}^{\rm Pin}\right|^2$ can be expressed as follows: 
\begin{align}
& \left| {\boldsymbol \psi} _{01} - {\boldsymbol \psi}^{\rm Pin}\right|^2- \left| {\boldsymbol \psi} _1 - {\boldsymbol \psi}^{\rm Pin}\right|^2\\\nonumber =&
 \left(
\frac{   2^R   x_1
    +    x_{01}
  }{ 2^R+1 }-x_{01}
 \right)^2 + y_{01}
^2     - \left(
\frac{   2^R   x_1
    +    x_{01}
  }{ 2^R+1 }-x_{1}
 \right)^2 -  y_{1} ^2   \\\nonumber =&  \frac{2^R-1}{2^R+1}
  \left( x_1
    -    x_{01}
 \right)^2  + y_{01} ^2    -  y_{1}
 ^2  .
\end{align}
Therefore, if $|y_{01}|> |y_1|$, the constraint shown in \eqref{const 16 :2} is satisfied.  
 
By combining \eqref{optimal1}, \eqref{optimal12}, \eqref{optimal2}, and \eqref{optimal22}, the concise expressions shown in Lemma \ref{lemma3} can be obtained and the proof is complete. 

\bibliographystyle{IEEEtran}
\bibliography{IEEEfull,trasfer}

\begin{thebibliography}{10}
\providecommand{\url}[1]{#1}
\csname url@samestyle\endcsname
\providecommand{\newblock}{\relax}
\providecommand{\bibinfo}[2]{#2}
\providecommand{\BIBentrySTDinterwordspacing}{\spaceskip=0pt\relax}
\providecommand{\BIBentryALTinterwordstretchfactor}{4}
\providecommand{\BIBentryALTinterwordspacing}{\spaceskip=\fontdimen2\font plus
\BIBentryALTinterwordstretchfactor\fontdimen3\font minus
  \fontdimen4\font\relax}
\providecommand{\BIBforeignlanguage}[2]{{%
\expandafter\ifx\csname l@#1\endcsname\relax
\typeout{** WARNING: IEEEtran.bst: No hyphenation pattern has been}%
\typeout{** loaded for the language `#1'. Using the pattern for}%
\typeout{** the default language instead.}%
\else
\language=\csname l@#1\endcsname
\fi
#2}}
\providecommand{\BIBdecl}{\relax}
\BIBdecl

\bibitem{you6g}
X.~You, C.~Wang, J.~Huang \emph{et~al.}, ``Towards {6G} wireless communication
  networks: {V}ision, enabling technologies, and new paradigm shifts,''
  \emph{Sci. China Inf. Sci.}, vol.~64, no. 110301, pp. 1--74, Feb. 2021.

\bibitem{irs1}
M.~D. Renzo, M.~Debbah, D.-T. Phan-Huy, A.~Zappone, M.-S. Alouini, C.~Yuen,
  V.~Sciancalepore, G.~C. Alexandropoulos, J.~Hoydis, H.~Gacanin, J.~de~Rosny,
  A.~Bounceu, G.~Lerosey, and M.~Fink, ``Smart radio environments empowered by
  {AI} reconfigurable meta-surfaces: An idea whose time has come,''
  \emph{EURASIP J. on Wirel. Com. Netw.}, vol. 129, pp. 1--20, May 2019.

\bibitem{irs2}
Q.~{Wu} and R.~{Zhang}, ``Intelligent reflecting surface enhanced wireless
  network via joint active and passive beamforming,'' \emph{IEEE Trans. Wirel.
  Commun.}, vol.~18, no.~11, pp. 5394--5409, Nov. 2019.

\bibitem{10318061}
L.~Zhu, W.~Ma, and R.~Zhang, ``Modeling and performance analysis for movable
  antenna enabled wireless communications,'' \emph{IEEE Trans. Wireless
  Commun.}, vol.~23, no.~6, pp. 6234--6250, Jun. 2024.

\bibitem{9264694}
K.-K. Wong, A.~Shojaeifard, K.-F. Tong, and Y.~Zhang, ``Fluid antenna
  systems,'' \emph{IEEE Trans. Wireless Commun.}, vol.~20, no.~3, pp.
  1950--1962, Mar. 2021.

\bibitem{pinching_antenna2}
A.~Fukuda, H.~Yamamoto, H.~Okazaki, Y.~Suzuki, and K.~Kawai, ``Pinching antenna
  - using a dielectric waveguide as an antenna,'' \emph{NTT DOCOMO Technical
  J.}, vol.~23, no.~3, pp. 5--12, Jan. 2022.

\bibitem{mypa}
Z.~Ding, R.~Schober, and H.~V. Poor, ``Flexible-antenna systems: A
  pinching-antenna perspective,'' \emph{IEEE Trans. Commun.}, vol.~73, no.~10,
  pp. 9236--9253, Oct. 2025.

\bibitem{11434944}
Y.~Xu, J.~Cui, Y.~Zhu, Z.~Ding, T.-H. Chang, R.~Schober, V.~W.~S. Wong, O.~A.
  Dobre, G.~K. Karagiannidis, H.~V. Poor, and X.~You, ``Generalized
  pinching-antenna systems: A tutorial on principles, design strategies, and
  future directions,'' \emph{IEEE Commun. Surveys Tuts.}, vol.~28, pp.
  5872--5908, 2026.

\bibitem{11202577}
Z.~Wang, C.~Ouyang, X.~Mu, Y.~Liu, and Z.~Ding, ``Modeling and beamforming
  optimization for pinching-antenna systems,'' \emph{IEEE Trans. Commun.},
  vol.~73, no.~12, pp. 13\,904--13\,919, Dec. 2025.

\bibitem{11348983}
C.~Ouyang, H.~Jiang, Z.~Wang, Y.~Liu, and Z.~Ding, ``Uplink and downlink
  communications in segmented waveguide-enabled pinching-antenna systems
  ({SWANs}),'' \emph{IEEE Trans. Commun.}, vol.~74, pp. 3688--3703, 2026.

\bibitem{11289518}
H.~Jiang, Z.~Wang, and Y.~Liu, ``Pinching-antenna system {(PASS)}-enhanced
  covert communications,'' \emph{IEEE J. Sel. Areas Commun.}, vol.~44, pp.
  1736--1750, 2026.

\bibitem{11215679}
K.~Wang, Z.~Ding, and N.~Al-Dhahir, ``Pinching-antenna systems for physical
  layer security,'' \emph{IEEE Wireless Commun. Lett.}, vol.~15, pp. 260--264,
  2026.

\bibitem{11303890}
E.~Illi, M.~Qaraqe, and A.~Ghrayeb, ``Secure pinching antenna-aided {ISAC},''
  \emph{IEEE Commun. Lett.}, vol.~30, pp. 727--731, 2026.

\bibitem{mynpj}
Z.~Ding, ``Pinching-antenna assisted {ISAC}: a {CRLB} perspective,'' \emph{npj
  Wirel. Technol.}, vol.~1, no.~4, 2025.

\bibitem{11212813}
W.~Mao, Y.~Lu, Y.~Xu, B.~Ai, O.~A. Dobre, and D.~Niyato, ``Multi-waveguide
  pinching antennas for {ISAC},'' \emph{IEEE Trans. Wireless Commun.}, vol.~25,
  pp. 5846--5858, 2026.

\bibitem{11314615}
M.~Liu, Y.~Xiao, J.~Chen, S.~Yang, X.~Lei, and M.~Xiao, ``Integrated sensing
  and communication with index modulation over pinching antennas,'' \emph{IEEE
  Commun. Lett.}, vol.~30, pp. 737--741, 2026.

\bibitem{11414134}
C.~Ouyang, Z.~Wang, Y.~Liu, and Z.~Ding, ``Rate region of {ISAC} for
  pinching-antenna systems,'' \emph{IEEE Trans. Commun.}, vol.~74, pp.
  5849--5866, 2026.

\bibitem{11360288}
B.~Wu, F.~Fang, M.~Zeng, and X.~Wang, ``Straggler-resilient federated learning
  over a hybrid conventional and pinching antenna network,'' \emph{IEEE Trans.
  Veh. Tech.}, pp. 1--6, 2026.

\bibitem{hinapa1}
S.~Asaad, H.~Tabassum, and P.~Wang, ``Pinching antennas-assisted low-latency
  federated learning over multi-user wireless networks,'' (submitted) Available
  on-line at arXiv:2603.08595.

\bibitem{yushen1}
Y.~Lin and Z.~Ding, ``Tail-latency-aware federated learning with pinching
  antenna: Latency, participation, and placement,'' \emph{IEEE Trans. Wireless
  Commun.}, (submitted) Available on-line at arXiv:2601.18097.

\bibitem{11195162}
Y.~Zhu, Z.~Ding, and X.~You, ``Topological perspective of large-scale
  multi-cell deployment of excitable waveguide dielectrics,'' \emph{IEEE
  Wireless Commun. Lett.}, vol.~15, pp. 151--155, 2026.

\bibitem{11315149}
Y.~Sun, Z.~Ding, and G.~K. Karagiannidis, ``A stochastic geometric analysis on
  multi-cell pinching-antenna systems under blockage effect,'' \emph{IEEE
  Wireless Commun. Lett.}, vol.~15, pp. 1085--1089, 2026.

\bibitem{mymulcell}
Z.~Ding, ``Toward a quiet wireless world: Multi-cell pinching-antenna
  transmission,'' \emph{IEEE Wireless Commun. Lett.}, (submitted) Available
  on-line at arXiv:2602.19459.

\bibitem{7012044}
X.~Chen, J.~Wu, Y.~Cai, H.~Zhang, and T.~Chen, ``Energy-efficiency oriented
  traffic offloading in wireless networks: A brief survey and a learning
  approach for heterogeneous cellular networks,'' \emph{IEEE J. Sel. Areas
  Commun.}, vol.~33, no.~4, pp. 627--640, 2015.

\bibitem{9061001}
N.~Kato, B.~Mao, F.~Tang, Y.~Kawamoto, and J.~Liu, ``Ten challenges in
  advancing machine learning technologies toward {6G},'' \emph{IEEE Wireless
  Commun.}, vol.~27, no.~3, pp. 96--103, 2020.

\bibitem{mojobabook}
M.~Vaezi, Z.~Ding, and H.~V. Poor, \emph{Multiple Access Techniques for {5G}
  Wireless Networks and Beyond}.\hskip 1em plus 0.5em minus 0.4em\relax
  Springer International Publishing, 2019.

\bibitem{7498101}
C.~Rosa, K.~Pedersen, H.~Wang, P.-H. Michaelsen, S.~Barbera, E.~Malkamäki,
  T.~Henttonen, and B.~Sébire, ``Dual connectivity for {LTE} small cell
  evolution: functionality and performance aspects,'' \emph{IEEE Commun. Mag.},
  vol.~54, no.~6, pp. 137--143, 2016.

\bibitem{7883844}
Y.~Wu and L.~P. Qian, ``Energy-efficient {NOMA}-enabled traffic offloading via
  dual-connectivity in small-cell networks,'' \emph{IEEE Commu. Lett.},
  vol.~21, no.~7, pp. 1605--1608, Jul. 2017.

\bibitem{6076617}
S.~Dimatteo, P.~Hui, B.~Han, and V.~O. Li, ``Cellular traffic offloading
  through wifi networks,'' in \emph{Proc. IEEE Eighth Int. Conf. on Mobile
  Ad-Hoc and Sensor Systems}, Valencia, Spain, 2011, pp. 192--201.

\bibitem{Cover1991}
T.~Cover and J.~Thomas, \emph{Elements of Information Theory}, 6th~ed.\hskip
  1em plus 0.5em minus 0.4em\relax Wiley and Sons, New York, 1991.

\bibitem{closedformzid}
Z.~{Ding} and H.~V. Poor, ``Analytical optimization for antenna placement in
  pinching-antenna systems,'' \emph{IEEE Trans. Wireless Commun.}, (submitted)
  Available on-line at arXiv:2507.13307, 2025.

\bibitem{Boyd}
S.~Boyd and L.~Vandenberghe, \emph{Convex Optimization}.\hskip 1em plus 0.5em
  minus 0.4em\relax Cambridge University Press, Cambridge, UK, 2003.

\bibitem{9679390}
Z.~Ding, D.~Xu, R.~Schober, and H.~V. Poor, ``Hybrid {NOMA} offloading in
  multi-user {MEC} networks,'' \emph{IEEE Trans. Wireless Commun.}, vol.~21,
  no.~7, pp. 5377--5391, 2022.

\bibitem{CMCC}
``{C-RAN}: The road towards green {RAN},'' China Mobile Res. Inst., Beijing,
  China, Oct. 2011, White Paper, ver. 2.5.

\bibitem{Dingpoor1311}
Z.~Ding and H.~V. Poor, ``The use of spatially random base stations in cloud
  radio access networks,'' \emph{IEEE Signal Process. Lett.}, vol.~20, no.~11,
  pp. 1138--1141, Nov 2013.

\bibitem{7513863}
M.~Peng, S.~Yan, K.~Zhang, and C.~Wang, ``Fog-computing-based radio access
  networks: issues and challenges,'' \emph{IEEE Network}, vol.~30, no.~4, pp.
  46--53, 2016.

\bibitem{7827017}
H.~Q. Ngo, A.~Ashikhmin, H.~Yang, E.~G. Larsson, and T.~L. Marzetta,
  ``Cell-free massive {MIMO} versus small cells,'' \emph{IEEE Transactions on
  Wireless Communications}, vol.~16, no.~3, pp. 1834--1850, 2017.

\end{thebibliography}
  \end{document}